\documentclass[11pt]{article}
\usepackage{mystyle}
\usepackage{natbib}

\usepackage{algorithm}
\usepackage{algpseudocode}

\newtheorem{Alg}{Algorithm}
\newcommand{\myalg}[4][0cm]{
\medskip
\small{
\fbox{
\parbox{5.2in}{\vspace{#1}
\begin{Alg}\label{#2}{\textsl{ #3}}
\vspace{.1cm}\\ \emph{ #4}
\end{Alg}
}}
\medskip
}}

\DeclareMathOperator{\cost}{cost}
\newcommand{\ctopk}{\cost_{k}}
\newcommand{\Bin}{B_{\text{in}}}
\newcommand{\Gin}{G_{\text{in}}}
\newcommand{\Bout}{B_{\text{out}}}
\newcommand{\bout}{b_{\text{out}}}
\renewcommand{\bot}{\operatorname{bottom}}
\renewcommand{\top}{\operatorname{top}}
\DeclareMathOperator{\distortion}{distortion}

\DeclareMathOperator{\robustness}{robustness}
\DeclareMathOperator{\consistency}{consistency}

\newcommand{\Mopt}{M^*}
\newcommand{\itemcolor}{black}
\newcommand{\agentcolor}{cyan}

\begin{document}
\title{Optimal Metric Distortion for Learning-Augmented Matching on the Line}
\author{
Jabari Hastings\thanks{Supported by the Simons Foundation Collaboration on the Theory of Algorithmic Fairness and the Simons Foundation Investigators Award 17351.}\\
Stanford University
\and
Marena Richter\thanks{Funded by the Deutsche Forschungsgemeinschaft (DFG, German Research Foundation) – 459420781 (FOR AlgoForGe) and affiliated with the Lamarr Institute for Machine Learning and Artificial Intelligence. }\\
University of Bonn
}

\date{\vspace{15pt}\small{\today}}

\maketitle

\begin{abstract} 
We revisit the problem of matching on the line with ordinal preferences.
In the classic setting, there are $n$ agents and $n$ items in a shared unknown line metric,
and the goal is to find a low-cost perfect matching using only the agents' rankings of 
the items by distance. 
A mechanism has \emph{distortion} $\alpha$ if it always outputs a matching whose cost 
is within a factor of $\alpha$ of the optimum, in every consistent line metric. 

In the \textit{learning-augmented} setting, the mechanism is also supplied with a prediction 
that conveys additional information about the instance. 
The quality of this prediction is unknown, and the goal is to optimize the mechanism's
distortion when the prediction is accurate
(\emph{consistency}), while preserving worst-case guarantees when the prediction is arbitrarily
inaccurate (\emph{robustness}). We propose a mechanism that takes a matching as its prediction
and guarantees $1$-consistency and $3$-robustness. By recovering an optimal
matching when the prediction is perfectly accurate while retaining the optimal prediction-free
distortion guarantee when it is arbitrarily inaccurate, we resolve an open question of
Filos-Ratsikas et al. (IJCAI, 2025).
\end{abstract}

\thispagestyle{empty}
\clearpage 

\section{Introduction}
Matching is a fundamental assignment task that underpins a host of high-stakes, real-world scenarios. To see its profound societal impact, one needs to look no further than the school choice problem. In many countries, the placement of children into schools relies on centralized matching mechanisms. Given the outsize role of schooling in a child's development, these assignments are frequently contentious. Describing the pressure these systems place on families, \citet{Tobin:2003aa}
wrote of one U.S. school district:
\begin{quote}
\textit{``This complicated new system requires you to be a savvy parent. Some parents have treated the choice application with a casualness befitting a form for summer camp. Yet it requires the same level of care you would give an IRS tax return.''} \\
--- Thomas Tobin, \textit{St. Petersburg Times}, 2003
\end{quote}

Part of this anxiety stems from the informational constraints of the matching mechanisms often deployed, such as the \emph{Boston Mechanism} \citep{Abdulkadiroglu:2003aa},
\emph{Deferred Acceptance} \citep{Gale:1962aa}, and
\emph{Random Serial Dictatorship} \citep{Abdulkadiroglu:1998aa}. These algorithms typically receive only ordinal information (e.g., rankings), while the cardinal social costs they are meant to optimize are often latent or under-specified. Given this limitation, it seems reasonable to expect them to be less efficient than an algorithm with complete information.

In computational social choice, an ordinal mechanism's loss in efficiency is often analyzed under the \textit{distortion} framework, introduced by \citet{Procaccia:2006aa} and surveyed by \citet{Anshelevich:2021aa}. Here, an ordinal mechanism is evaluated in terms of its worst-case approximation ratio of the optimal social cost (called the \textit{distortion}), over all cardinal values  consistent with a given set of rankings. Since the approximation ratio could be unbounded when the underlying cardinal values are unrestricted, the literature often adds mild constraints to make the analysis tractable and meaningful. A particularly fruitful direction is the \textit{metric distortion} model \citep{Anshelevich:2015aa}, where ordinal preferences are assumed to arise from distances in an unknown metric space. The goal is then to design mechanisms that, using only these rankings, achieve low social cost for every metric consistent with the observed preferences.
Regardless of whether the distances represent literal commutes to a school or simply abstract proximities in preference, this simultaneous approximation serves as a powerful robustness guarantee.

Applying the metric distortion framework to matching problems has revealed how challenging it is to optimize without cardinal information. In their seminal work, \citet{Caragiannis:2016aa,Caragiannis:2024ab} showed that for the objective of minimizing total cost, the \textit{Serial Dictatorship} has a worst-case distortion of $O(2^n)$, though its variant \textit{Random Serial Dictatorship} has expected distortion $O(n)$. Subsequent work has not yet improved beyond this linear guarantee. \citet{Anari:2023aa} established a lower bound of $\Omega(\log n)$ and gave the deterministic \textit{RepMatch} mechanism with distortion $O(n^2)$, later adapted by \citet{Hastings:2025aa} to broader fairness objectives. Even the recent derandomization of \textit{Random Serial Dictatorship} by \citet{Haqi:2026aa} remains at the linear threshold, with distortion $\widetilde{O}(n)$.  Achieving sublinear distortion remains a notoriously challenging open question in the field.

One pathway to achieving stronger distortion guarantees is to understand more deeply how the expressiveness of the underlying metric affects the performance of ordinal mechanisms. A natural starting point for this inquiry is the line metric, a classic abstraction for economic and political alignment \citep{Hotelling:1929aa, Downs:1957aa}. In this domain, the literature has seen sharp progress: \citet{Filos-Ratsikas:2025aa} demonstrated that a specialized deterministic algorithm can achieve a constant distortion of exactly 3, which they also proved to be optimal. In contrast, all known mechanisms for the general metric problem incur distortion $\Omega(n)$ on some line instance.

Another strategy to bypass worst-case barriers is to supplement a mechanism with a modest amount of additional information about the problem instance. One increasingly prominent approach to formalizing this idea is the \textit{learning-augmented} framework, which explores how predictions can improve algorithmic performance \citep{Mitzenmacher:2022aa}. Such advice is highly practical: in real-world scenarios, modern machine learning models or domain expertise can often provide useful, albeit partial, estimates of latent parameters.
In a school choice setting, for example, historical application patterns and geographic distances could provide a reasonable estimate of a family's underlying costs.
Motivated by this potential, predictions have recently been integrated into classic mechanism design settings \citep{Agrawal:2022aa,Xu:2022} and, notably, into the voting problem \citep{Berger:2024aa}. Across all these domains, the fundamental challenge is designing mechanisms that satisfy two goals in tension: \textit{consistency} (achieving near-optimal efficiency when the predictions are highly accurate) and \textit{robustness} (maintaining a strict worst-case baseline guarantee even if the advice is entirely erroneous).

We explore the intersection of these two promising directions by investigating matching with predictions on the line. Here, the goal is to design a mechanism that achieves near-perfect consistency while having distortion no worse than prediction-free mechanisms---a direction recently suggested by \citet{Filos-Ratsikas:2025aa}. However, the extent to which predictions can reduce distortion below 3 remains unclear. Recent work on voting by \citet{Berger:2024aa} provides a cautionary precedent: they proved that no deterministic voting rule can simultaneously be $\frac{3 - \delta}{1 + \delta}$-consistent and better than $\frac{3 + \delta}{1 - \delta}$-robust for any $\delta \in [0, 1)$, even when the metric is restricted to line instances. For example, a $2$-approximation for consistency would require a $5$-approximation for robustness, far worse than the distortion of 3 achievable without predictions \citep{Gkatzelis:2020aa, Kizilkaya:2022aa, Kizilkaya:2023ab}. In light of this, we ask: do matching mechanisms fare any better?

\subsection{Our Contributions}
We consider deterministic mechanisms that are augmented by a predicted matching $\widehat{M}$. This design choice is in line with learning-augmented mechanism design, where predictions are often made in the outcome space \citep{Agrawal:2022aa}. This is also natural in practice: such predictions could arise from historical assignment data, statistical models of the latent metric, or, more plainly, a guess of the optimal outcome.  

Our main contribution is designing a mechanism that incorporates the predicted
matching to achieve 1-consistency and 3-robustness
(\Cref{thm:prediction-best-of-both}), resolving an open question of
\citet{Filos-Ratsikas:2025aa}. This guarantee holds for every top-$k$ cost,
and therefore, by majorization, extends to the fairness ratio: distortion with
respect to any monotone symmetric norm \citep{Goel:2017aa, Goel:2018aa}.
We also show a refined distortion guarantee: if the prediction $\widehat{M}$
has distortion $\eta$, then our augmented mechanism has distortion
$\min\{\eta,3\}$.

En route to proving this best of both worlds guarantee, we introduce a prediction-free mechanism that achieves optimal distortion on the line (\Cref{thm:reserved-outliers}). Our mechanism is inspired by but is more general than the \textit{OrderMatch} mechanism \citet{Filos-Ratsikas:2025aa}, which is optimal on the line. We show that a significantly larger family of matchings has low distortion, giving us a convenient bounding box for selecting based on the prediction.

\subsection{Related Work}
\paragraph{Learning-Augmented Framework.} In recent years, algorithms with predictions have garnered substantial attention. Since the formulation of the learning-augmented framework by \citet{Lykouris:2021}, there have been numerous applications to problems in social choice and beyond, such as voting \citep{Berger:2024aa,Filos-Ratsikas:2025ab}, secretary problems \citep{Dutting:2021,Antonios:2023}, and facility location \citep{Agrawal:2022aa,Xu:2022}. We refer the reader to \citep{Mitzenmacher:2022aa} for a survey of earlier work on algorithms with predictions, and to \citep{Lindermayr} for a catalog of recent work.

\paragraph{Ordinal Matching.} Since the early work of \citet{Hylland:1979aa}, a long line of work has studied the problem of matching with ordinal information. \citet{Filos-Ratsikas:2014aa,Amanatidis:2022aa} analyze the utilitarian distortion of mechanisms, providing optimal distortion bounds of $O(\sqrt{n})$ for randomized mechanisms (under unit-sum and unit-range valuations) and $O(n^2)$ for deterministic mechanisms (under unit-sum valuations). This utilitarian setting is a maximization problem, in contrast to the metric minimization setting studied here.

A related direction studies how much cardinal information is needed to improve distortion. \citet{Amanatidis:2022aa,Amanatidis:2024aa} show that querying a small number of values per agent yields polynomial improvements, including $O(\sqrt{n})$ distortion with two queries per agent and $O(n^{1/(\lambda+1)})$ distortion with $O(\lambda\log n)$ queries per agent. This trade-off was later sharpened by \citet{Ebadian:2025aa}, who obtain the optimal $O(n^{1/\lambda})$ distortion using $\lambda$ queries per agent. Other work considers range queries on agent utilities \citep{Ma:2021aa,Latifian:2025aa}. Recently, \citet{Gkatzelis:2026aa} showed that metric distortion can fall below $3$ when ordinal preferences are combined with approval information and inter-item distances.

Metric distortion has also been applied to maximum matchings with ordinal information under metric assumptions. In contrast to our minimization setting, the distortion bounds there are constant for both truthful and randomized mechanisms \citep{Anshelevich:2016aa,Anshelevich:2016ab}.

\section{Preliminaries}\label{sec:prelims}

We let $A = \{a_1, a_2, \dots, a_n\}$ denote the set of agents, and let $B = \allowbreak \{b_1, b_2, \dots, b_n\}$ denote the set of items, where $n\geq 2$ is some integer. 
A \emph{problem instance}, denoted by $\sigma$, is given by $n$ \emph{preference lists}, one provided by each agent. Agents in $A$ provide a permutation of the items in $B$; permutations are given in non-decreasing order of distance. 

For items $b_i,b_j\in B$, we write $b_i \succ_a b_j$ if agent $a$ ranks $b_i$ above $b_j$,
and write $b_i \succeq_a b_j$ if $b_i=b_j$ or $b_i \succ_a b_j$; the relations
$\prec_a$ and $\preceq_a$ are defined analogously.
Given a set $S \subseteq B$, we let $\top_a(S)$ and $\bot_a(S)$ denote agent $a$'s most preferred and least preferred item, respectively, in the set $S$. 

\paragraph{Metric spaces.} 
Agents and items are all assumed to lie on the real line, $\mathbb R$. 
Their distances are conveyed by an unknown \emph{metric} \allowbreak $d:(A\cup B) \times (A\cup B) \to \mathbb{R}_{\geq 0}$. 
By definition, the function $d$ satisfies the following properties.

\begin{enumerate}[label=(\roman*),noitemsep]
	\item \textit{Identity of indiscernibles:} $d(x, y) = 0$ iff $x = y$,\footnote{As in \citep{Filos-Ratsikas:2025aa}, we do not allow collocation of agents or items in the line metric.}

	\item \textit{Symmetry:} $d(x, y) = d(y, x)$,

	\item \textit{Triangle inequality:} $d(x, y) \leq d(x, z) + d(z, y)$.
\end{enumerate}
We say that a metric $d$ is \emph{consistent} with an instance $\sigma$ if, for every agent $a\in A$ and items $b,c\in B$, $b\succ_a c$ implies $d(a,b)\leq d(a,c)$. We let $\rho(\sigma)$ denote the set of all line metrics consistent with the problem instance. 
Throughout, we restrict attention to instances $\sigma$ with
$\rho(\sigma)\neq\emptyset$.

Since agents and items are on a line, we slightly abuse notation by letting agent and item names also denote their coordinate positions on $\mathbb{R}$. 
For instance, $b < b_i$ indicates that item $b$ is strictly to the left of item $b_i$, and $a \leq b_r$ indicates that agent $a$ lies no farther right than item $b_r$.

\paragraph{Matchings.} A matching $M$ is a set of disjoint pairs
$\{a,b\}$ with $a\in A$ and $b\in B$. When $\{a,b\}\in M$, we write
$M(a)=b$ and $M(b)=a$. Thus, we view a matching as an involution on the
matched agents and items whenever convenient. When every agent and every item
appears in exactly one pair, we say that $M$ is a \emph{perfect matching}.

\paragraph{Distortion.}
We define the top-$k$ \textit{cost} of a matching $M$ to be the sum of the $k$ largest distances between agents and their assigned items: 
\begin{equation*}
\ctopk(M, d) \coloneqq \max_{S \subseteq A, |S| = k} \sum_{a_i \in S} d(a_i, M(a_i)). 
\end{equation*}
We define the top-$k$ \textit{distortion} of a perfect matching $M$ with respect to a specific metric $d$ as the ratio of that matching's cost compared to that of the optimal matching:
\begin{equation*}
\distortion_k(M, d) \coloneqq \frac{\ctopk(M, d)}{\min_{M'} \ctopk(M', d)}.
\end{equation*}
The worst-case top-$k$ distortion of a matching for a given instance $\sigma$ is given by:  
\begin{equation*}
\distortion_k(M, \sigma) \coloneqq \sup_{d \in \rho(\sigma)} \distortion_k(M, d).
\end{equation*}

\paragraph{Matching with predictions.}
A \emph{learning-augmented mechanism} is any algorithm $F(\sigma, \widehat{M})$ that takes an instance $\sigma$ and a prediction $\widehat{M}$ as input, and outputs a perfect matching. 
In this work, we assume that the prediction $\widehat{M}$ is a perfect matching.

We define two notions typically used to measure the effectiveness of mechanisms that incorporate predictions. Let $\rho(\sigma, \widehat{M}) \subseteq \rho(\sigma)$ denote the set of consistent metrics for which $\widehat{M}$ is an optimal perfect matching. The \textit{consistency} of a mechanism $F$ is its worst-case performance when the provided prediction is exactly correct:
\begin{equation*}
\consistency_k(F) \coloneqq \sup_{\sigma, \widehat{M}}  \sup_{d \in \rho(\sigma, \widehat{M})} \distortion_k( F(\sigma, \widehat{M}), d ).
\end{equation*}
The \textit{robustness} of a mechanism is its worst-case distortion over all instances and predictions, accounting for when the prediction may be arbitrarily wrong:
\begin{equation*}
\robustness_k(F) \coloneqq \sup_{\sigma, \widehat{M}}  \sup_{d \in \rho(\sigma)} \distortion_k( F(\sigma, \widehat{M}), d ).
\end{equation*}
Note that if a mechanism $F$ does not use the prediction $\widehat{M}$ at all, its consistency and robustness simply equal the worst-case distortion of the matching it produces.

\subsection{Preliminary Lemmas}

\paragraph{Optimal matching.}
In any given instance, there may be multiple matchings that optimize a particular top-$k$ objective.
In much of our analysis, we rely on the useful fact that greedily matching the leftmost agent with the leftmost item results in an optimal matching. 
\begin{lemma}[\citet{Filos-Ratsikas:2025aa}]\label{lem:assortative}
Fix a line metric $d$ and let $\pi_A: A \to [n]$ and $\pi_B: B \to [n]$ be functions mapping agents and items to their positional ranks on the line from left to right. The matching
$M = \{\{a_i, b_j\} : \pi_A(a_i)=\pi_B(b_j)\}$
is an optimal matching that minimizes the top-$k$ cost for every $k\in[n]$.
\end{lemma}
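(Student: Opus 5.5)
The plan is an exchange (uncrossing) argument that turns any perfect matching into the sorted matching $M$ without increasing any top-$k$ cost. Since the top-$k$ cost is a monotone symmetric function of the vector of matched distances, it suffices to exhibit a transformation under which the new distance vector is weakly majorized from below by the old one.

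\textbf{Key lemma (two pairs).} Take agents $a<a'$ and items $b>b'$, so the pairs $\{a,b\},\{a',b'\}$ cross. Swapping to $\{a,b'\},\{a',b\}$ gives
\[
\max\{d(a,b'),d(a',b)\}\le\max\{d(a,b),d(a',b')\}
\quad\text{and}\quad
d(a,b')+d(a',b)\le d(a,b)+d(a',b').
\]
The sum inequality is the standard line fact, i.e.\ submodularity of $|x-y|$: check it case by case on the relative order of the four points, or use the Monge property of $|x-y|$. For the max inequality, $d(a',b)$ and $d(a,b')$ are distances between points lying inside the convex hull of the pair of intervals, and a short case analysis shows each is at most the larger original distance. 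For instance, if $b'\le a\le a'\le b$ then $d(a,b')\le d(a',b')$ and $d(a',b)\le d(a,b)$. If the points interleave differently, one uses that $b'$ and $b$ lie on the appropriate sides.

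\textbf{From two pairs to all $k$.} Two nonnegative pairs $(x_1,x_2)$ and $(y_1,y_2)$ with $\max y\le \max x$ and $\sum y\le\sum x$ satisfy that the top-$1$ and top-$2$ sums of $y$ are dominated by those of $x$. Replacing two coordinates of a vector in this way therefore does not increase the sum of the $k$ largest entries, for every $k$. Justifying this via the max-over-subsets definition of $\ctopk$ takes a small argument, which I expect to be the main obstacle to state cleanly. Take the optimal subset $S$ for the new vector. If $S$ contains both swapped agents, use the sum bound. If it contains exactly one, use the max bound and replace that agent in $S$ by the one with the larger old distance. If it contains neither, the cost is unchanged.

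\textbf{Convergence to $M$.} Starting from any optimal matching $M'$ for a given $k$, repeatedly uncross any crossing pair. Each uncrossing strictly decreases the number of inversions between the agent order and the matched-item order, so the process terminates. A matching with no crossing pair (with $a<a'$ implying $M(a)<M(a')$) is exactly the sorted matching $M$, since positions are distinct. Hence $\ctopk(M,d)\le\ctopk(M',d)$ for every $k$ simultaneously, so $M$ is optimal for all $k\in[n]$.
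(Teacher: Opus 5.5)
Your proof is correct, and it runs in the opposite direction from the paper. The paper does not prove this lemma. It imports it from \citet{Filos-Ratsikas:2025aa}, and then its appendix proof of \Cref{lem:uncrossing} gets the max and sum inequalities for a crossed pair by applying \Cref{lem:assortative} to the two-agent subinstance, followed by the same subset-replacement argument you give. You instead prove the two-pair inequalities directly, lift them to the uncrossing lemma, and recover the assortative lemma by repeated uncrossing, using the inversion count as a potential.

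This buys self-containment: both lemmas then rest only on elementary facts about $|x-y|$, and \Cref{lem:uncrossing} becomes a stepping stone rather than a consequence. The one thing you must keep is that the two-pair inequalities are proved from scratch. If you reused the paper's proof of \Cref{lem:uncrossing} as your uncrossing step, the argument would be circular.

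Two small points of presentation:
\begin{itemize}
\item The convex-hull remark is not what makes the max bound work; the ordering constraints are. The clean check is: $d(a,b')\le d(a,b)$ if $b'\ge a$, and $d(a,b')\le d(a',b')$ if $b'<a$. Symmetrically, $d(a',b)\le d(a,b)$ if $b\ge a'$, and $d(a',b)\le d(a',b')$ if $b<a'$.
\item Your claim that every uncrossing, adjacent or not, strictly lowers the inversion count is true but deserves a line. Swapping an inverted pair at positions $i<j$ removes exactly $2m+1$ inversions, where $m$ is the number of intermediate positions whose values lie strictly between the two swapped values. Alternatively, uncross only agents that are consecutive on the line: such a pair exists whenever the matching is unsorted, and each such swap reduces the count by exactly one.
\end{itemize}
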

For a fixed consistent line metric $d$, we let $\Mopt$ denote the assortative
matching from \Cref{lem:assortative}, which is guaranteed to be optimal for every
top-$k$ objective.

\paragraph{Crossings and swaps.}
For a matching $M$ and agents $a_i,a_j$ with $a_i<a_j$, we say their
assignments are \textit{crossed} if $M(a_j)<M(a_i)$, where inequalities refer
to the true line order. The \textit{$(a_i,a_j)$-swap} of $M$ is the matching
obtained by replacing
$\{a_i,M(a_i)\},\{a_j,M(a_j)\}$ with
$\{a_i,M(a_j)\},\{a_j,M(a_i)\}$; if the assignments are crossed, we call this
swap an \textit{uncrossing}. The next lemma, proved in \Cref{app:prelims}, shows
that uncrossing can only improve the top-$k$ cost.

\begin{restatable}[Uncrossing weakly decreases cost]{lemma}{uncrossinglemma}
\label{lem:uncrossing}
Let $M'$ be the result of a $(a_i,a_j)$-swap on a matching $M$ with
$a_i<a_j$ and $M(a_j)<M(a_i)$. Then
$\ctopk(M',d)\leq \ctopk(M,d)$ for every $k\in [n]$.
\end{restatable}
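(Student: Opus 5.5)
The plan is to reduce the top-$k$ comparison to a majorization statement about the multiset of distances. Write $x=a_i<y=a_j$ for the two agents, and $p=M(a_j)<q=M(a_i)$ for their crossed items. The swap affects only two distances: the old pair is $u_1=|x-q|$, $u_2=|y-p|$ and the new pair is $v_1=|x-p|$, $v_2=|y-q|$. Since $\ctopk$ is the sum of the $k$ largest entries of the distance vector, it is a monotone, symmetric, convex (Schur-convex) function of that vector. So it suffices to show two things:
\[
\max\{v_1,v_2\}\le\max\{u_1,u_2\}, \qquad v_1+v_2\le u_1+u_2 .
\]
Together these say $(v_1,v_2)$ is weakly submajorized by $(u_1,u_2)$. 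Weak submajorization of the changed coordinates, with all other coordinates identical, gives weak submajorization of the full vectors. That in turn gives $\ctopk(M',d)\le\ctopk(M,d)$ for every $k$.

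To make the final step self-contained, I would argue it directly rather than cite majorization theory. Let $S$ attain $\ctopk(M',d)$ and split on how many of $a_i,a_j$ lie in $S$. If neither does, the same set $S$ has identical cost in $M$. If both do, the sum bound shows $S$ costs at least as much in $M$. If exactly one does, replace its new distance by $\max\{u_1,u_2\}$, realized by the appropriate agent in $M$ (possibly swapping which of $a_i,a_j$ is in $S$). This keeps $|S|=k$ and does not decrease the cost, by the max bound. In every case $\ctopk(M,d)$ is at least the cost of $S$ under $M'$.

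The main work, and the only real obstacle, is verifying the two inequalities on the line for every interleaving of $x<y$ and $p<q$. I would handle this by noting that all four distances are evaluations of $|\cdot|$, and checking each bound separately.

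For the sum, the claim $|x-p|+|y-q|\le|x-q|+|y-p|$ is the standard fact that the uncrossed assignment minimizes total absolute difference. One way to see it is that $f(t)=|t-p|-|t-q|$ is nondecreasing in $t$ when $p<q$, so $f(x)\le f(y)$, which rearranges to exactly the needed inequality.

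For the max bound, I would use that each new distance is dominated by an old one according to position:
\begin{itemize}
\item $|x-p|\le\max\{|x-q|,|y-p|\}$. If $x\le p$ then $|x-p|\le|x-q|$ since $p<q$. If $x>p$ then $|x-p|\le|y-p|$ since $y>x$.
\item $|y-q|\le\max\{|x-q|,|y-p|\}$, by the symmetric argument: if $y\ge q$ compare with $|y-p|$, otherwise compare with $|x-q|$.
\end{itemize}
These cases are exhaustive, so both inequalities hold and the lemma follows.
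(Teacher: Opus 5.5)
Your proposal is correct and matches the paper's proof. Both reduce to $\max\{v_1,v_2\}\le\max\{u_1,u_2\}$ and $v_1+v_2\le u_1+u_2$ for the two affected agents, then use the same case split on $|S\cap\{a_i,a_j\}|$ for a set $S$ attaining $\ctopk(M',d)$; the only difference is that the paper gets these two inequalities by applying \Cref{lem:assortative} (top-$1$ and top-$2$ optimality) to the two-agent subinstance, whereas your direct case check on the line is equally valid and slightly more self-contained.
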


\section{Matching without Predictions}\label{sec:line}
As a precursor to our results for matching with predictions, we revisit the traditional (prediction-free) setting of matching on the line. 
Here, it is known that the \textit{OrderMatch} mechanism of \citet{Filos-Ratsikas:2025aa} attains distortion at most 3 with respect to any top-$k$ cost objective, which is optimal among ordinal matching mechanisms. The high-level approach of \textit{OrderMatch} is quite natural. Motivated by  \Cref{lem:assortative}, the mechanism aims to recover an ordering of agents and items and then greedily matches them in an assortative fashion. There may be instances where the mechanism fails to determine the relative location of certain outlier items. Nevertheless, by first greedily matching ordered items with ordered agents, and matching the outliers last, the resulting matching cannot have large distortion.

In this section, we strengthen that result to show that an even larger family of assortative matchings may have distortion at most 3. In fact, any matching obtained by first \textit{arbitrarily} assigning the
outliers and then assortatively matching the remaining items will also have distortion at most 3. This freedom will be important in \Cref{sec:predictions}, where the predicted
matching will determine which agents receive the outliers without sacrificing
the worst-case distortion guarantee.

\subsection{OrderMatch with Reserved Outliers}\label{sec:order-match}
Our approach builds on that of \citet{Filos-Ratsikas:2025aa} and follows a similar sequence of steps. We first identify a pair of extreme items, which serve as anchors for the agent/item ordering we construct. We then recover an ordering for as many items as can be gleaned from the agents' preferences. Finally, we use this ordering to match agents to items.

At a finer level of detail, however, our approach differs in several subtle ways. To present the strengthened guarantees of \textit{OrderMatch} cleanly, we define certain components differently. We highlight these differences as they arise.

\paragraph{Finding extreme items.}
Consider the set of items that are most preferred by at least one agent,
\begin{equation*}
    B_+ \coloneqq \{ \top_a(B) : a \in A \}.
\end{equation*}
Observe that the set of items ranked last by agents when restricted to $B_+$, defined as 
\begin{equation*}
    B_{\text{ext}} \coloneqq \{ \bot_a(B_+) : a \in A \},
\end{equation*}
has cardinality at most two: the only possible elements are the leftmost and
rightmost items in $B_+$. Indeed, any other item $b\in B_+$ lies strictly
between these two endpoint items and thus cannot be ranked last by any agent.
We denote the leftmost and rightmost items in $B_+$ under the true line order by
$b_\ell$ and $b_r$, respectively. This labeling is only for the analysis; the
algorithms presented in this work need only identify the unordered set of
extremes.

\paragraph{Ordering items on the line.}
The items $b_\ell$ and $b_r$ serve as convenient anchors, enabling us to recover an ordering of some of the items on the line. 

\citet{Filos-Ratsikas:2025aa} show that when agents $a_\ell$ and $a_r$  with $\top_{a_\ell}(B) = b_\ell$ and $\top_{a_r}(B)  = b_r$ are carefully chosen, one can recover the ordering of many of the items on the line that are located between $b_\ell$ and $b_r$ or are otherwise close to one of these extreme items.

Motivated by this insight, we define the set $\Bin$\footnote{Our definition differs from that of
\citet{Filos-Ratsikas:2025aa}, who define
$\Gin(a_i,a_j)=\{b\in B: b\succ_{a_i} b_\ell \text{ or } b\succ_{a_j} b_r\}$
for agents $a_i,a_j$ with $\top_{a_i}(B)=b_\ell$ and
$\top_{a_j}(B)=b_r$, chosen to maximize $|\Gin(a_i,a_j)|$. We use a predicate
over all agents rather than fixing these two witnesses, making $B\setminus\Bin$ 
the set of items every agent ranks below both extremes.}:
\begin{equation}
\Bin = \{ b \in B : b \succeq_{a} b_\ell \ \text{or} \ b \succeq_{a} b_r \ \text{for some $a \in A$} \}.
\end{equation}
We define the items outside the set $\Bin$ as the outliers:
\begin{equation}
    \Bout = B \setminus \Bin .
\end{equation}
Equivalently, every outlier is ranked below both extremes by every agent.
Intuitively, $\Bin$ is the set of items that can be sorted. If an item $b$ is not in the set $\Bin$, then every agent prefers it less than both $b_\ell$ and $b_r$. This item could in principle be far to the left or right, but one would not be able to tell from the preferences. 
In \Cref{lem:weak-item-order}, we formally justify that $\Bin$ can be ordered.
The proof is deferred to \Cref{app:order-match}; our argument is similar in spirit to order-recovery procedures
of \citep{Elkind:2014aa, Babashah:2025aa}, but we give a direct proof tailored
to the set $\Bin$.

\begin{restatable}[Recovering an Item Ordering]{lemma}{weakitemorderlemma}
\label{lem:weak-item-order}
    There is a polynomial-time algorithm that on input $\sigma$, recovers the left-to-right ordering (or reversal) of the items in $\Bin$, denoted by $\pi_B$. 
\end{restatable}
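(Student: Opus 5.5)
The plan is to first show that $\Bin$ is contained in the closed interval $[b_\ell,b_r]$. Ordering $\Bin$ then reduces to ordering a set of items that lie between the two anchors. After that I recover the order from pairwise comparisons read off the preference lists, and sort. Throughout, I fix an arbitrary labeling of the (at most two) elements of $B_{\text{ext}}$ as $b_\ell,b_r$, which accounts for the ``or reversal'' in the statement. If $|B_{\text{ext}}|=1$, then $B_+$ is a single item $b_\ell=b_r$. I treat that case separately at the end.

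\emph{Step 1: $\Bin\subseteq[b_\ell,b_r]$.} Suppose $b\in\Bin$ with $b<b_\ell$, witnessed by an agent $a$ with $b\succeq_a b_\ell$ (the case $b\succeq_a b_r$ with $b< b_\ell$ is easier, since $b_r$ is even farther right). Then $d(a,b)\le d(a,b_\ell)$ and $b\neq b_\ell$. If $a\ge b_\ell$, then $b<b_\ell\le a$ gives $d(a,b)>d(a,b_\ell)$, a contradiction. If $a<b_\ell$, then $\top_a(B)\in B_+$ lies at or to the right of $b_\ell$, so $d(a,\top_a(B))\ge d(a,b_\ell)\ge d(a,b)$. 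By no-collocation and strictness of the top choice, this forces $\top_a(B)=b$. But then $b\in B_+$ with $b<b_\ell$, contradicting the minimality of $b_\ell$. The symmetric argument handles $b>b_r$.

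\emph{Step 2: a certifiable comparison.} For each $b\in\Bin$, I pick a witness $a_b$ with $b\succeq_{a_b}b_\ell$ or $b\succeq_{a_b}b_r$, and also fix agents $a_\ell,a_r$ whose top items are $b_\ell,b_r$. To order two items $x,y\in[b_\ell,b_r]$, I use betweenness: $y$ lies strictly between $x$ and $z$ exactly when no agent can rank $y$ last among $\{x,y,z\}$. The condition of this biconditional that is checkable from $\sigma$ is that no agent ranks $y$ last. I then define $x\prec y$ to mean that $x$ lies between $b_\ell$ and $y$.
\begin{itemize}
\item One direction is automatic: if $x$ is between $b_\ell$ and $y$, then no agent ranks $x$ last among $\{b_\ell,x,y\}$.
\item For the converse I need a witness. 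If $y$ is between $b_\ell$ and $x$, some agent must rank $y$ last or rank $x$ in a way that certifies the order.
\end{itemize}
To get that witness, I use the witness of the farther item. Say $x$ is farther from $b_\ell$, with $x\succeq_{a_x} b_\ell$. Then $a_x$ is at least as close to $x$ as to $b_\ell$, so $a_x$ lies right of the midpoint of $b_\ell$ and $x$. Such an agent ranks $x$ above $b_\ell$, and the order in which it ranks $y$ depends on its position. Combining this with the ranking of $a_\ell$ (which has $b_\ell$ first, so $a_\ell$ is near $b_\ell$) should give a consistent comparison rule.

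\emph{Step 3 and the main obstacle.} Once a comparison is available for every pair in $\Bin$, I sort, which takes polynomial time. The main obstacle is Step 2: giving a comparison rule, computable from $\sigma$, that is correct for every pair. The difficulty is agents whose positions make their rankings ambiguous, for instance an agent sitting between $x$ and $y$. The first thing I would try is the rule ``$x\prec y$ iff some agent ranks $y$ below both $b_\ell$-side items $b_\ell$ and $x$ while ranking $x$ above $y$'' (so $y$ is the item ranked last among $\{b_\ell,x,y\}$, which by Step 2's betweenness criterion means $y$ is not between $b_\ell$ and $x$). I would verify this rule using the witness agents $a_x,a_y$, with a short case analysis on which anchor each witness certifies against. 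In the degenerate case $b_\ell=b_r$, Step 1 shows $\Bin=\{b_\ell\}$, which is trivially ordered.
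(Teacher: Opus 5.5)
Your Step 1 is false, and it fails in exactly the case you set aside as ``easier.'' Your argument does correctly exclude an item $b<b_\ell$ witnessed by $b\succeq_a b_\ell$. But an item $b<b_\ell$ can be witnessed by $b\succeq_a b_r$: an agent near $b_\ell$ will rank a nearby item just left of $b_\ell$ above the distant $b_r$. That $b_r$ lies farther right makes this easier, not harder. Concretely, put items at $-1$, $0.5$, $10.5$ and agents at $0$, $4$, $10$. The top choices are $0.5,0.5,10.5$, so $b_\ell=0.5$ and $b_r=10.5$. The agent at $0$ ranks $c=-1$ above $b_r$, so $c\in\Bin$ although $c<b_\ell$. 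Symmetrically, $\Bin$ can contain items right of $b_r$.

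Your whole plan rests on Step 1. Your target relation ``$x$ lies between $b_\ell$ and $y$'' is a linear order only on one side of $b_\ell$: for $c<b_\ell<y$, neither item lies between $b_\ell$ and the other. Your concrete rule also becomes contradictory on such pairs. In the example, the agent at $0$ ranks $b_r$ last among $\{b_\ell,c,b_r\}$, giving $c\prec b_r$, while the agent at $10$ ranks $c$ last, giving $b_r\prec c$. Moreover, for two items both left of $b_\ell$ the rule sorts them right to left.

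The missing idea is a way to decide, from $\sigma$ alone, on which side of the anchors each item of $\Bin$ lies. The paper takes $L=\{b\in\Bin: b\prec_a b_\ell \text{ for every } a\}$ and $R=\{b\in\Bin: b_r\succ_a b \text{ for every } a\}$. It shows these are exactly the items of $\Bin$ strictly left of $b_\ell$ and strictly right of $b_r$:
\begin{itemize}
\item Your correct half of Step 1 is precisely the argument that a left item of $\Bin$ falls in $L$. Such an item must then be witnessed against $b_r$, so it is not in $R$.
\item The agents $a_\ell,a_r$ show that items in $[b_\ell,b_r]$ fall in neither set.
\end{itemize}
Each block is then ordered by a single preference list, with no pairwise comparison rule. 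Since no item lies strictly between $a_\ell$ and $b_\ell$, the list of $a_\ell$ orders all items right of $b_\ell$ from left to right. Symmetrically, the reversed list of $a_r$ orders $L$. This also replaces your Step 2, which as written is still a conjecture (``should give,'' ``I would verify'') even for items inside $[b_\ell,b_r]$.
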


We next show why the
remaining items can safely be treated as outliers.
The proof is deferred to \Cref{app:order-match}; while straightforward, it crucially relies on the predicate formulation of $\Bin$. 
\begin{restatable}[Outlier geometry]{lemma}{outliergeometrylemma}
\label{lem:outlier-geometry}
Every outlier $\bout \in \Bout$ lies outside the interval between
$b_\ell$ and $b_r$. Moreover, for every $\bout\in \Bout$, the following hold:
\begin{enumerate}[label=(\arabic*),noitemsep]
    \item If $\bout < b_\ell$, then $\bout\leq b$ for every $b\in\Bin$.
    \item If $b_r < \bout$, then $b\leq \bout$ for every $b\in\Bin$.
\end{enumerate}
\end{restatable}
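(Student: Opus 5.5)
We argue directly from the definitions of $B_+$ and $\Bin$, using only the line geometry. The basic tool is this: every item of $B_+$ is some agent's top choice, so $b_\ell$ and $b_r$ are each the favorite of some agent. Call these agents $a_\ell$ and $a_r$, with $\top_{a_\ell}(B)=b_\ell$ and $\top_{a_r}(B)=b_r$. Since $b_\ell,b_r\in\Bin$, every outlier is distinct from both extremes.

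\emph{Step 1: outliers lie outside $[b_\ell,b_r]$.} Suppose $b_\ell<\bout<b_r$. Consider the agent $a_\ell$. If $a_\ell\le \bout$, then $d(a_\ell,\bout)\le d(a_\ell,b_r)$, because $\bout$ lies between $a_\ell$ and $b_r$. If instead $a_\ell>\bout$, then $\bout$ lies between $b_\ell$ and $a_\ell$, so $d(a_\ell,\bout)\le d(a_\ell,b_\ell)$. In either case $a_\ell$ is at least as close to $\bout$ as to one of the extremes.

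We must turn this distance comparison into a ranking statement $\bout\succeq_{a_\ell} b_\ell$ or $\bout\succeq_{a_\ell} b_r$. Collocation is disallowed, so the only issue is ties, and this is the delicate point. A strict inequality $d(a_\ell,\bout)<d(a_\ell,b_x)$ forces $\bout\succ_{a_\ell} b_x$, since the reverse ranking would contradict consistency. A tie can arise only in special positions. If $d(a_\ell,\bout)=d(a_\ell,b_r)$ with $\bout$ between them, then $\bout=b_r$, which is impossible. If $d(a_\ell,\bout)=d(a_\ell,b_\ell)$ with $\bout$ between them, then $\bout=b_\ell$, also impossible. So the inequalities are strict and we obtain $\bout\in\Bin$, a contradiction.

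\emph{Step 2: items (1) and (2).} By symmetry it suffices to prove (1). Let $\bout<b_\ell$ and suppose some $b\in\Bin$ satisfies $b<\bout$. Then $b<\bout<b_\ell\le b_r$, so $\bout$ lies strictly between $b$ and both extremes. Since $b\in\Bin$, some agent $a$ has $b\succeq_a b_\ell$ or $b\succeq_a b_r$. Consistency gives $d(a,b)\le d(a,b_x)$ for the relevant extreme $b_x$.

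Now use the fact that on the line, if $\bout$ lies strictly between $b$ and $b_x$, then $d(a,\bout)<\max\{d(a,b),d(a,b_x)\}=d(a,b_x)$. Strictness holds because points strictly inside an interval are strictly closer than the farther endpoint. Hence $\bout\succ_a b_x$, which places $\bout\in\Bin$, a contradiction. The same strict-interior argument also gives a cleaner proof of Step 1, taking $b=b_\ell$ and $b_x=b_r$ with agent $a_\ell$.

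\emph{Main obstacle.} The main obstacle is the passage from distance inequalities to the preference relation $\succeq$. This is where the predicate form of $\Bin$ ("for some agent") matters: we only need to exhibit one agent. The non-collocation assumption then lets us use strict inequalities, so ties in the reported rankings do no harm.
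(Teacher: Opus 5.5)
Your proof is correct and follows essentially the same route as the paper: an item strictly between two others on the line is strictly closer to any agent than the farther of the two, so an outlier in the forbidden region would be ranked above an extreme by some agent, which would place it in $\Bin$. The differences are cosmetic: you fix the witness $a_\ell$ in Step 1, and in Step 2 you replace the paper's case split on the agent's position (passing through $b$ by transitivity when $\bout<a$) with the single fact that an interior point is strictly closer than the farther endpoint.
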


\paragraph{Ordering agents on the line.} With the item ordering
in place, we can recover a coarse ordering of the agents by sorting them
according to their top-choice items.
As in \citet{Filos-Ratsikas:2025aa}, top-choice items provide coarse location
information about agents: agents with left-to-right ordered top choices must
respect the same left-to-right order.
The proof is deferred to \Cref{app:order-match}.

\begin{restatable}[Recovering a Weak Agent Ordering]{lemma}{weakagentorderlemma}
\label{prop:weak-agent-order}
Suppose the ordering $\pi_B$ recovered by \Cref{lem:weak-item-order} is
oriented consistently with the true left-to-right ordering of $\Bin$. There is
a polynomial-time algorithm that, given $\sigma$ and $\pi_B$, returns a weak
ordering $\pi_A$ of the agents with the following properties:
\begin{enumerate}[label=(\arabic*),noitemsep]
    \item The equivalence classes of $\pi_A$ are exactly the sets
    $\{a\in A:\top_a(B)=b\}$, for $b\in B_+$.
    \item If $a_i$ strictly precedes $a_j$ in $\pi_A$, then $a_i<a_j$ in the
    true left-to-right order of the agents.
\end{enumerate}
\end{restatable}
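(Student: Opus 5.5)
The plan is to order the agents by the position of their top choice under $\pi_B$. First we check that $B_+\subseteq\Bin$. For $b\in B_+$, pick an agent $a$ with $\top_a(B)=b$. Then $b\succeq_a b_\ell$, so $b$ satisfies the defining predicate of $\Bin$. Hence every top choice receives a position in $\pi_B$, and $B_+$ is computable from $\sigma$ in polynomial time.

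The algorithm is as follows. Compute $\top_a(B)$ for each agent. Define the weak order $\pi_A$ by letting $a_i$ precede $a_j$ whenever $\top_{a_i}(B)$ precedes $\top_{a_j}(B)$ in $\pi_B$. Agents with the same top choice are placed in the same class. Sorting the agents takes polynomial time. Property (1) then holds by construction. Two agents fall in the same class exactly when their top choices coincide, since $\pi_B$ is a strict order on the distinct items of $B_+$.

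For property (2), suppose $a_i$ strictly precedes $a_j$. Write $b=\top_{a_i}(B)$ and $c=\top_{a_j}(B)$. Because $\pi_B$ is correctly oriented, $b<c$ on the line. We argue by contradiction and assume $a_j\le a_i$; equality is excluded because two agents at the same point would have the same top choice. Since $a_i$ weakly prefers $b$ to $c$ and $b<c$, we get $d(a_i,b)\le d(a_i,c)$. This forces $a_i\le (b+c)/2$. In the same way, $a_j$ weakly prefers $c$ to $b$, so $d(a_j,c)\le d(a_j,b)$, which forces $a_j\ge (b+c)/2$. Combining these gives $a_i\le a_j$, which contradicts the assumption. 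The case $a_i=a_j=(b+c)/2$ is also ruled out: it would make $d(a_i,b)=d(a_i,c)$ while the two agents occupy the same position, which collocation forbids. Therefore $a_i<a_j$.

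The only delicate step is handling ties in distance. The consistency definition only gives weak inequalities $d(a,b)\le d(a,c)$, so equidistance must be handled through the no-collocation assumption together with the strict order of top choices. Apart from that, the argument is a direct midpoint comparison on the line.
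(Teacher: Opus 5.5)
Your proof is correct and follows essentially the same route as the paper. You order agents by the $\pi_B$-position of their top item, which is well defined since $B_+\subseteq\Bin$; property (1) then holds by construction; and property (2) follows from the midpoint comparison $a_i\le (b+c)/2\le a_j$, with the no-collocation assumption making the inequality strict.

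The only blemish is the tie handling. The contradiction framing is unnecessary. Your remark that collocated agents ``would have the same top choice'' is also not justified under the weak consistency definition, since equidistant items may be ranked either way. This is moot, because the model forbids collocation outright, and that is exactly how the paper concludes.
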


\paragraph{Ordered matching.}
Finally, we assortatively match agents and items. Our mechanism first
reserves the outlier items for arbitrary agents and then order-matches the
remaining agents to the sortable items. The formal description of our mechanism is given in \Cref{alg:reserved-outliers}.

\begin{center}
\myalg{alg:reserved-outliers}{OrderMatch with Reserved Outliers}{
\textbf{Input:} A preference table $\sigma$. 
\begin{enumerate}[noitemsep,leftmargin=*]
    \item Compute $\Bin$, $\Bout$, the order $\pi_B$ of $\Bin$, and the weak
    agent order $\pi_A$.
    \item Choose an arbitrary set $A_{\mathrm{res}} \subseteq A$ with
    $|A_{\mathrm{res}}|=|\Bout|$ and arbitrarily and bijectively match the items
    in $\Bout$ to these agents. Let this partial matching be $M_1$.
    \item Let $M_2$ be an assortative matching between
    $A\setminus A_{\mathrm{res}}$ and $\Bin$ according to $\pi_A$ and $\pi_B$.
    Ties among agents with the same top item are broken arbitrarily.
    \item Return the resulting matching $M = M_1 \cup M_2$.
\end{enumerate}
}
\end{center}
By allowing the items of $\Bout$ to be reserved for arbitrary agents, our mechanism can return a matching from a large family, including the matchings selected by \citet{Filos-Ratsikas:2025aa}. In the following theorem, we show that all of these options must have low distortion. The next section is devoted to proving this theorem. 

\begin{theorem} \label{thm:reserved-outliers}
For every instance $\sigma$ and every $k\in [n]$, any
matching $M$ returned by \Cref{alg:reserved-outliers} satisfies
$\distortion_k(M,\sigma) \leq 3$.
\end{theorem}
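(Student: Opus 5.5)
Fix a consistent line metric $d$, orient the line so that $\pi_B$ agrees with the true order of $\Bin$, and let $\Mopt$ be the assortative matching from \Cref{lem:assortative}. I would bound the top-$k$ cost of $M$ by comparing it with $\Mopt$ edge by edge, through an explicit charging scheme. Concretely, I aim to construct a bijection $\phi: A\to A$ and show, for every agent $a$,
\[
d(a,M(a)) \le d(\phi(a),\Mopt(\phi(a))) + d(\psi(a),\Mopt(\psi(a))) + d(\chi(a),\Mopt(\chi(a))),
\]
where $\phi$, $\psi$, and $\chi$ are each injective. Given such a bound, summing over any set $S$ of $k$ agents gives $\ctopk(M,d)\le 3\,\ctopk(\Mopt,d)$, because each of the three sums is a sum of at most $k$ distinct optimal edges.

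\textbf{Step 1: reduce the non-outlier part.} First I would use \Cref{lem:uncrossing}. By \Cref{prop:weak-agent-order}, $M_2$ agrees with the true order except possibly inside a class of agents sharing a top item. Uncrossing within such a class only lowers cost, so it is enough to bound a version of $M_2$ that is truly assortative between $A\setminus A_{\mathrm{res}}$ and $\Bin$. I would check that this works: the bound must hold for every tie-breaking, and each uncrossing is a weak improvement, so it suffices to bound the fully sorted matching $M_2'$. This step needs care, because uncrossing lowers the cost of the sorted matching relative to the arbitrary one, which is the wrong direction for an upper bound on $M$. So I would use instead that agents in a class share a top item $b$. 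For two agents $a,a'$ in the same class, $|d(a,x)-d(a',x)|\le d(a,b)+d(a',b)$, and $d(a,b)\le d(a,\Mopt(a))$ because $b$ is $a$'s top choice. This lets a tie swap be paid for by optimal edges.

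\textbf{Step 2: bound the outliers.} By \Cref{lem:outlier-geometry}, each $\bout$ lies outside $[b_\ell,b_r]$ and beyond all of $\Bin$. Its reserved agent $a$ satisfies $d(a,\bout)\ge d(a,b_\ell),d(a,b_r)$. I would bound $d(a,\bout)$ by $d(a,\Mopt(a)) + d(\Mopt(a),\bout)$. Then I would charge $d(\Mopt(a),\bout)$ to the optimal edge $\{\Mopt(\bout),\bout\}$ plus a path between agents. The key observation is that $\Mopt(\bout)$ is an extreme agent in the assortative order, lying near the leftmost or rightmost outliers. Also, every agent ranks $\bout$ below both extremes, so $d(\Mopt(\bout),\bout)\ge d(\Mopt(\bout),b_\ell)$ and $d(\Mopt(\bout),\bout)\ge d(\Mopt(\bout),b_r)$. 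This means the optimal edge at an outlier is at least as long as the span from its agent to the interval $[b_\ell,b_r]$, which is large enough to cover the crossing detour.

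\textbf{Step 3: the shifted assortative part.} Removing $A_{\mathrm{res}}$ shifts the assortative matching of the remaining agents by at most $|\Bout|$ positions. Here I would use a standard shift argument. An agent matched to an item that is $t$ positions away from its optimal item is charged its own optimal edge plus the edge of the agent whose optimal item it took. The displacement is "paid" by the optimal edges of outliers, since every item strictly between the two points lies inside the span covered by some optimal edge.

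I expect Step 2 and Step 3 together to be the main obstacle. The difficulty is designing $\psi$ and $\chi$ so that they stay injective when reserved agents are chosen adversarially, for example when they sit in the middle of the line. My first attempt would be a left-to-right matching of the "displacement chains" created by each reserved agent to the outlier edges. I would verify injectivity by noting that chains on the same side of the line are nested.
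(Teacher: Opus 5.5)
\textbf{There is a genuine gap, in Step 3.} Your Step 1 estimate for agents sharing a top item is right; it is exactly Case 1 of the paper's argument. Step 2 uses the right facts about outliers. Step 3, however, is not merely unfinished: a certificate of the form you propose cannot exist for $M$ itself. Take $m\geq 4$ and a small $\varepsilon>0$. Place agents $x_i$ at $i$ for $0\le i\le m$, items $b_i$ at $i+\varepsilon$ for $1\le i\le m$, and an item $o$ at $-(m+2\varepsilon)$. Then $B_+=\{b_1,\dots,b_m\}$ and $\Bout=\{o\}$. The optimum $\Mopt$ has one long edge $\{x_0,o\}$ of length $m+2\varepsilon$ and $m$ edges of length $\varepsilon$. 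Reserving $o$ for $x_m$ is allowed and costs $2m+2\varepsilon$. Only $x_0$ and $x_1$ share a top item, so under any tie-breaking every $x_i$ with $2\le i\le m-1$ receives $b_{i+1}$, at cost $1+\varepsilon$. Each of these $m-2$ agents must charge the long edge at least once, and $x_m$ must charge it at least twice, giving at least $m\geq 4$ charges. A bijection and two injections can hit $x_0$ at most three times. So the displacement chains cannot be paid injectively, nested or not. The bound $\ctopk(M,d)\le 3\ctopk(\Mopt,d)$ still holds in this example, but no per-agent injective witness for $M$ exists.

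\textbf{The missing idea is to cross rather than uncross.} This is the direction you rejected in Step 1. \Cref{prop:cleanup} builds a canonical matching $\widetilde M$ with three properties:
\begin{itemize}
\item it agrees with $M$ on $A_{\mathrm{res}}$;
\item it is reached from $M$ by reverse uncrossings, so $\ctopk(M,d)\le\ctopk(\widetilde M,d)$ for every $k$ by \Cref{lem:uncrossing};
\item every agent of $A^*_{\mathrm{in}}$ that holds an item of $\Bin$ holds one from its own top-choice block.
\end{itemize}
A counting argument on the two sorted lists guarantees that each defective agent has a swap partner on the correct side holding an item of its block. It also shows that removing right-defects creates no left-defects. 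The effect is to funnel each displacement chain onto a single exterior agent: in the example above, $\widetilde M(x_0)=b_m$ and $\widetilde M(x_i)=b_i$ for $1\le i\le m-1$. For $\widetilde M$ one then proves the pointwise bound $d(a,\widetilde M(a))\le d(a,\Mopt(a))+2d(a',\Mopt(a'))$ with $a'=\Mopt(\widetilde M(a))$. The proof splits into cases on whether $a$ and $a'$ lie in $A^*_{\mathrm{in}}$ and uses three facts: your shared-top-item estimate; that every agent ranks each outlier below both $b_\ell$ and $b_r$; and that no item beyond $b_r$ (resp.\ $b_\ell$) is closer to any agent than $b_r$ (resp.\ $b_\ell$). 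Since $a\mapsto\Mopt(\widetilde M(a))$ is a bijection, this is your three-map certificate with $\phi=\mathrm{id}$ and $\psi=\chi$, applied to $\widetilde M$ rather than to $M$.
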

Prior work has already shown that the optimal distortion attainable on the line by a deterministic mechanism is $3$ \citep{Filos-Ratsikas:2025aa,Caragiannis:2024ab}. Our result illustrates that a broad class of matchings achieves this optimal distortion on the line.

\subsection{Analysis of OrderMatch}
While \Cref{alg:reserved-outliers} has a succinct description, it makes
arbitrary decisions at a few steps, resulting in a large family of matchings.
This freedom is helpful when we later incorporate predictions
(\Cref{sec:predictions}), but we now have to prove distortion guarantees for
matchings which may significantly vary in structure. To that end, we apply a
sequence of transformations to a given matching $M$ to create a canonical, i.e.,
more structured representative, $\widetilde M$ that is easier to analyze.

\paragraph{A canonical matching.}
For a consistent metric $d\in\rho(\sigma)$, we consider the set of agents whose
assignment under $\Mopt$ lies in $\Bin$:
\begin{equation}\label{eq:A_in}
    A^*_{\mathrm{in}}=\{a\in A:\Mopt(a)\in \Bin\}
\end{equation}
Recall from \Cref{prop:weak-agent-order} that the recovered order of $\Bin$
partitions $A^*_{\mathrm{in}}$ into blocks based on top-choice items. 
For our analysis, we would like the canonical matchings to be those whose
mistakes are primarily confined to agents with the same top-choice item: if
agent $a_i$ is assigned an item $M(a_i)$, then the agent optimally matched to
that item should be in the same top-choice block as $a_i$.
\begin{definition}
Given a matching $M$, we say the assignment of $M(a_i)$ to $a_i$ is \textit{internal}
if, writing $a_j=\Mopt(M(a_i))$, we have
$a_i,a_j\in A^*_{\mathrm{in}}$ and $a_i$ and $a_j$ share the same top-choice
item.
\end{definition}

This terminology is adapted from \citet{Filos-Ratsikas:2025aa}, who analyze
\textit{OrderMatch} through a permutation graph.\footnote{In that graph,
assigning $M(a_i)$ to $a_i$ corresponds to the edge
$(a_i,\Mopt(M(a_i)))$; ``forward'' and ``backward'' mean that the edge points
to a later or earlier top-choice block, respectively.}
Their reduction uses that \textit{OrderMatch} induces no ``backward'' edges and
then removes the remaining ``forward'' edges. For
\Cref{alg:reserved-outliers}, this property can fail: because the reserved
agents are arbitrary, assignments from $A\setminus A_{\mathrm{res}}$ to
$\Bin$ may cross top-choice blocks in either direction. We therefore need a
reduction that removes both left- and right-defects while keeping the reserved
outlier assignments fixed. The following proposition establishes this; its
proof is deferred to \Cref{app:order-match}.

\begin{restatable}{proposition}{cleanupproposition}
\label{prop:cleanup}
Fix a metric $d\in\rho(\sigma)$, and let $M$ and $A_{\mathrm{res}}$ be as
defined in \Cref{alg:reserved-outliers}. There exists a matching
$\widetilde M$ that agrees with $M$ on $A_{\mathrm{res}}$, satisfies
$\ctopk(M,d)\leq \ctopk(\widetilde M,d)$ for every $k\in [n]$, and whose
assignments of items in $\Bin$ to agents in $A^*_{\mathrm{in}}$ are all
internal.
\end{restatable}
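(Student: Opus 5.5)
The plan is to construct $\widetilde M$ explicitly first, and then show that $M$ is obtained from $\widetilde M$ by a finite sequence of uncrossing swaps among agents of $A\setminus A_{\mathrm{res}}$. Each swap weakly decreases every top-$k$ cost by \Cref{lem:uncrossing}, so this chain gives $\ctopk(M,d)\leq\ctopk(\widetilde M,d)$ for every $k$. Because the swaps never touch $A_{\mathrm{res}}$, the agreement on $A_{\mathrm{res}}$ is automatic. The point of running the argument in this direction is that \Cref{lem:uncrossing} only lets cost go down. So the ``bad'' canonical matching must sit at the start of the swap chain and $M$ at the end.

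\textbf{Step 1: the structure of $M$ on $\Bin$.} Write $A_2=A\setminus A_{\mathrm{res}}$ and fix the orientation of $\pi_B$ consistent with $d$. Every item of $\Bin$ is assigned by $M_2$ to an agent of $A_2$. By \Cref{prop:weak-agent-order}, this assignment is monotone: if $a<a'$ lie in different blocks, then $M(a)\leq M(a')$ in the true order. Inside a block the tie-breaking is arbitrary, but every agent there has the same top choice.

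For each block $T$ (agents with top choice $b\in B_+$), let $I_T=\Mopt(T\cap A^*_{\mathrm{in}})\subseteq\Bin$. Since $\Mopt$ is assortative and blocks are ordered, the sets $I_T$ are consecutive intervals of $\Bin$, in block order. A defect is an agent $a_i\in T\cap A^*_{\mathrm{in}}\cap A_2$ with $M(a_i)\notin I_T$. Because of the reserved agents, the interval of $\Bin$ that $M_2$ hands to block $T$ is shifted relative to $I_T$, and the shift can go either left or right. That is exactly the two-sided defect situation described before the proposition.

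\textbf{Step 2: define $\widetilde M$.} Keep $M_1$ on $A_{\mathrm{res}}$. For each block $T$, give the agents of $T\cap A^*_{\mathrm{in}}\cap A_2$ items of $I_T$, chosen in the reverse of their true order (maximally crossed within the block). Give the remaining agents of $A_2$, namely agents outside $A^*_{\mathrm{in}}$ and the agents left over, the remaining items of $\Bin$. These must be arranged so that the whole assignment is ``anti-assortative'' relative to $M_2$ wherever $M_2$ differs.

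Counting: the items of $I_T$ not used by $T\cap A^*_{\mathrm{in}}\cap A_2$ are exactly the $\Mopt$-partners of reserved agents of $T\cap A^*_{\mathrm{in}}$. The free items and free agents therefore balance, so $\widetilde M$ is a perfect matching. Every assignment of a $\Bin$ item to an agent of $A^*_{\mathrm{in}}$ in $\widetilde M$ is internal by construction.

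\textbf{Step 3 (main obstacle): reaching $M$ by uncrossings.} It remains to show that $M$ can be reached from $\widetilde M$ by uncrossings. I would use the potential ``number of crossed pairs among $A_2$ with respect to the true order,'' and argue as follows. While the current matching differs from $M$ on $A_2$, there is an adjacent misordered pair whose swap is an uncrossing. It moves toward $M_2$, which is sorted by blocks, and it strictly decreases the potential.

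The difficulty is that crossings are defined by the \emph{true} positions of agents, and inside a block the true order is unknown, so $M_2$'s arbitrary intra-block tie-breaking may itself contain crossings. To handle this, I would first replace $M$ by its intra-block uncrossed version, which is sorted within blocks, and compare against that. This replacement is legitimate because uncrossing only lowers cost, and $\widetilde M$ can then be taken to dominate this sorted version. Then the standard bubble-sort argument applies: a permutation can be sorted by adjacent transpositions, each of which removes one inversion.

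The delicate point is checking that in $\widetilde M$ the defective agents really are placed so that every difference from $M$ is an inversion rather than a correctly ordered pair. In other words, for a right-shifted block the agent holding an item of $I_T$ must lie to the right of the agent that $M$ gives it to, and symmetrically for left shifts. I would verify this case by case, according to the sign of the shift of each block.
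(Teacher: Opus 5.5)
\textbf{There is a genuine gap in your Step 3, which carries the entire content of the proposition.} Your overall direction is fine in principle: exhibit an internal $\widetilde M$, then descend from it to $M$ by uncrossings. But Step 3 is not established, and the device you use for intra-class ties goes the wrong way. Let $M'$ be $M$ uncrossed within top-choice classes. Then $\ctopk(M',d)\le\ctopk(M,d)$, so proving $\ctopk(M',d)\le\ctopk(\widetilde M,d)$ tells you nothing about $M$.

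In fact, since classes respect the true line order and $\pi_B$ is the true order, $M'$ is simply the assortative matching of $A\setminus A_{\mathrm{res}}$ to $\Bin$. Every matching that agrees with $M$ on $A_{\mathrm{res}}$ therefore dominates it. Your reduction discards exactly the arbitrary tie-breaking, which is the only thing that makes the statement nontrivial.

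Concretely, take agents $e=0$, $x=1$, $y=2$, $z=10$ and items $o=-100$, $c_1=1.5$, $c_2=5$, $c_3=10.5$. Then $B_+=\{c_1,c_3\}$ and $\Bout=\{o\}$, and $\Mopt$ pairs $e$ with $o$, $x$ with $c_1$, $y$ with $c_2$, and $z$ with $c_3$. Reserve $o$ for $x$, and break the tie in the class $\{e,y\}$ as $y$ before $e$. Then $M$ is $x\mapsto o$, $y\mapsto c_1$, $e\mapsto c_2$, $z\mapsto c_3$, with $\cost_2(M,d)=101+5=106$. The intra-class sorted version, $e\mapsto c_1$ and $y\mapsto c_2$ with the rest unchanged, is internal. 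It is also a legitimate output of your Step 2 recipe, and it trivially dominates itself. Yet its top-$2$ cost is $101+3=104<106$.

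So $\widetilde M$ cannot be chosen independently of how $M$ broke ties. Your clause about arranging the leftover items ``anti-assortatively wherever $M_2$ differs'' is not a definition, and in this example that choice decides the outcome. Your bubble-sort step also presupposes what is to be proved. Inversion-removing swaps reach a non-sorted target only if that target already lies below $\widetilde M$ in the order generated by uncrossings, and ``verify case by case'' does not supply this.

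The paper avoids the issue by working upward from the actual $M$. It defines right and left defect sets $D_s^{\pm}$ and partner sets $P_s,Q_s$, where $P_s$ consists of agents left of $A_s$ holding an item of $B_s$ and $Q_s$ is the mirror image. It proves $|P_s(M)|\ge|D_s^+(M)|$ and $|Q_s(M)|\ge|D_s^-(M)|$ by counting positions in the class-sorted list. It then shows that each reverse-uncrossing swap removes one defect, preserves these inequalities, and creates no defect of the other type. Right defects are removed first, then left defects. Because every step starts from $M$ with its arbitrary tie-breaking, each step weakly increases cost in the needed direction.

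To keep your route, replace $M$ instead by its intra-class \emph{maximally crossed} version. $M$ is reachable from that version by uncrossings, so the inequality points the right way. You would still need a counting argument like the paper's partner-availability claim to finish.
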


With this proposition, we are now ready to prove one of our main results.
\begin{proof}[Proof of \Cref{thm:reserved-outliers}]
Fix a consistent metric $d \in \rho(\sigma)$.
Let $M$ be a matching returned by \Cref{alg:reserved-outliers}, and let
$\widetilde M$ be the matching guaranteed by \Cref{prop:cleanup}. 
We will show that for every agent $a_i \in A$, 
\begin{equation} \label{eq:pointwise-charge}
    d(a_i,\widetilde M(a_i))
    \leq
    d(a_i,\Mopt(a_i))
    +
    2d(\Mopt(\widetilde M(a_i)),\widetilde M(a_i)).
\end{equation} 
Granting the above inequality, we can prove our desired result as follows. 
Let $T$ be the set of $k$ agents with largest cost under
$\widetilde M$, and let
$T'=\{\Mopt(\widetilde M(a_i)):a_i\in T\}$. Since the composition of the two
perfect matchings is a bijection, we have $|T'|=k$.
Summing \eqref{eq:pointwise-charge} over $T$ gives
\[
    \ctopk(\widetilde M,d)
    \leq
    \sum_{a_i\in T} d(a_i,\Mopt(a_i))
    +
    2\sum_{a_j\in T'}d(a_j,\Mopt(a_j))
    \leq 3\ctopk(\Mopt,d).
\]
This, along with \Cref{prop:cleanup}, implies
$\ctopk(M,d)\leq 3\ctopk(\Mopt,d)$, and hence
$\distortion_k(M,d)\leq 3$. Since $d\in\rho(\sigma)$ was arbitrary,
$\distortion_k(M,\sigma)\leq 3$.

To establish \eqref{eq:pointwise-charge}, fix an agent $a_i$ and let
$a_j$ be the agent satisfying $\Mopt(a_j)=\widetilde M(a_i)$. We consider
four cases based on the locations of $a_i$ and $a_j$, as in
\citet{Filos-Ratsikas:2025aa}.

\paragraph{Case 1: $a_i,a_j\in A^*_{\mathrm{in}}$.}
Since $a_j\in A^*_{\mathrm{in}}$, the item $\widetilde M(a_i)$ lies in
$\Bin$.
By \Cref{prop:cleanup}, the assignment of $\widetilde M(a_i)$ to $a_i$ is
internal and thus, $a_i$ and $a_j$ have the same top item, say $g$.
It follows that 
\begin{align*}
    d(a_i,\widetilde M(a_i))
    &\leq d(a_i,\top_{a_i}(B))+d(a_j,\top_{a_j}(B)) 
    +d(a_j,\Mopt(a_j)) \\
    &\leq d(a_i,\Mopt(a_i))+2d(a_j,\Mopt(a_j)).
\end{align*}

\paragraph{Case 2: $a_i,a_j\notin A^*_{\mathrm{in}}$.}
By definition of the set $A^*_{\mathrm{in}}$, we have
$\Mopt(a_i),\Mopt(a_j)\in\Bout$. Thus
$b_\ell\succ_{a_i}\Mopt(a_i)$ and
$b_\ell\succ_{a_j}\Mopt(a_j)$, so
\begin{align*}
    d(a_i,\widetilde M(a_i))
    &\leq d(a_i,b_\ell)+d(a_j,b_\ell)+d(a_j,\Mopt(a_j)) \\
    &\leq d(a_i,\Mopt(a_i))+2d(a_j,\Mopt(a_j)).
\end{align*}

\noindent The next two cases use the following simple consequence of the
definition of the items $b_\ell$ and $b_r$.

\begin{claim}\label{claim:exterior-dominance}
If $b>b_r$, then $d(a,b_r)\leq d(a,b)$ for every agent $a\in A$.
Similarly, if $b<b_\ell$, then $d(a,b_\ell)\leq d(a,b)$ for every agent
$a\in A$.
\end{claim}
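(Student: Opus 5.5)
The proof uses only the definition of $b_r$ and $b_\ell$ as the extreme elements of $B_+$, together with the fact that each agent's top choice lies in $B_+$. By symmetry (reflect the line), it suffices to prove the first statement. Fix an item $b>b_r$ and an agent $a$, and split according to where $a$ lies relative to $b_r$.

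\emph{Case $a\leq b_r$.} The points are ordered $a\leq b_r<b$ on the line, so $d(a,b)=d(a,b_r)+d(b_r,b)\geq d(a,b_r)$. This case is immediate.

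\emph{Case $a>b_r$.} This is where the extremality of $b_r$ enters. Let $g=\top_a(B)\in B_+$. Since $b_r$ is the rightmost item of $B_+$, we have $g\leq b_r<a$. Because $g$ is $a$'s closest item, no item lies strictly within distance $d(a,g)$ of $a$.

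\begin{itemize}
\item If $b$ lay strictly between $b_r$ and $a$, or at least were closer to $a$ than $g$, it would contradict $g$ being $a$'s top choice.
\item More precisely, $g\leq b_r<a$ gives $d(a,b_r)\leq d(a,g)\leq d(a,b)$, where the second inequality holds because $g$ is the top choice.
\end{itemize}

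Note that the first inequality uses only that $b_r$ lies between $g$ and $a$ on the line, which holds whether $b$ is to the left or to the right of $a$. This gives the claim.

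Under the paper's no-collocation assumption and arbitrary tie-breaking, these inequalities are weak and still hold for true distances. Consistency of $d$ is used only to identify $\top_a(B)$ as a nearest item. The main subtlety is the second case; it is resolved by interposing the top choice $g$, which lies on the far side of $b_r$ from $a$.
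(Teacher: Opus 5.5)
Your proposal is correct and follows essentially the same argument as the paper. You split on whether $a\le b_r$; in the case $a>b_r$ you interpose $\top_a(B)\le b_r<a$ to get $d(a,b_r)\le d(a,\top_a(B))\le d(a,b)$, and you invoke symmetry for the $b_\ell$ statement.
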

\begin{proof}
We prove the first statement; the second is analogous. If $a\leq b_r<b$, then
the claim is immediate. If $a>b_r$, then $\top_a(B)\leq b_r$, since $b_r$ is
the rightmost top-choice item. Thus $\top_a(B)\leq b_r<a$, which implies
$d(a,b_r)\leq d(a,\top_a(B))\leq d(a,b)$.
\end{proof}

\paragraph{Case 3: $a_i\notin A^*_{\mathrm{in}}$ and
$a_j\in A^*_{\mathrm{in}}$.}
By definition of $A^*_{\mathrm{in}}$, we have $\Mopt(a_i)\in\Bout$.
Moreover, since $a_j\in A^*_{\mathrm{in}}$ and
$\Mopt(a_j)=\widetilde M(a_i)$, we have $\widetilde M(a_i)\in\Bin$.
Assume first that $\Mopt(a_i)<b_\ell$; the case $b_r<\Mopt(a_i)$ is
symmetric. By \Cref{lem:outlier-geometry},
$\Mopt(a_i)\leq \widetilde M(a_i)$.
If $\widetilde M(a_i)\leq b_r$, then
$\Mopt(a_i)\leq \widetilde M(a_i)\leq b_r$. Since $\Mopt(a_i)$ is an
outlier, $\Mopt(a_i)\prec_{a_i} b_r$. Together with
$\Mopt(a_i)\leq \widetilde M(a_i)\leq b_r$, this gives
$\widetilde M(a_i)\succ_{a_i}\Mopt(a_i)$. Hence
\begin{align*}
    d(a_i,\widetilde M(a_i))
    &\leq d(a_i,\Mopt(a_i)) \\
    &\leq d(a_i,\Mopt(a_i))+2d(a_j,\Mopt(a_j)).
\end{align*}

If instead $\widetilde M(a_i)>b_r$, then
$d(a_j,b_r)\leq d(a_j,\Mopt(a_j))$ by
\Cref{claim:exterior-dominance}, while
$d(a_i,b_r)\leq d(a_i,\Mopt(a_i))$ because
$\Mopt(a_i)\prec_{a_i} b_r$. Hence
\begin{align*}
    d(a_i,\widetilde M(a_i))
    &\leq d(a_i,b_r)+d(a_j,b_r)+d(a_j,\Mopt(a_j)) \\
    &\leq d(a_i,\Mopt(a_i))+2d(a_j,\Mopt(a_j)).
\end{align*}

\paragraph{Case 4: $a_i\in A^*_{\mathrm{in}}$ and
$a_j\notin A^*_{\mathrm{in}}$.}
By definition of $A^*_{\mathrm{in}}$, $\Mopt(a_i)\in\Bin$ and
$\Mopt(a_j)\in\Bout$. Assume first that $\Mopt(a_j)<b_\ell$; the case
$b_r<\Mopt(a_j)$ is symmetric.
If $d(a_i,b_r)\leq d(a_i,\Mopt(a_i))$, then using that
$\Mopt(a_j)\prec_{a_j} b_r$ gives
\begin{align*}
    d(a_i,\widetilde M(a_i))
    &\leq d(a_i,b_r)+d(a_j,b_r)+d(a_j,\Mopt(a_j)) \\
    &\leq d(a_i,\Mopt(a_i))+2d(a_j,\Mopt(a_j)).
\end{align*}
Otherwise, $d(a_i,\Mopt(a_i))<d(a_i,b_r)$; suggestively,
$\Mopt(a_i)\succ_{a_i} b_r$. In this case $\Mopt(a_i)\leq b_r$ by
\Cref{claim:exterior-dominance}. Also
$\Mopt(a_j)\leq \Mopt(a_i)$ by \Cref{lem:outlier-geometry}. Thus
$\Mopt(a_j)\leq \Mopt(a_i)\leq b_r$. Since
$\Mopt(a_j)\prec_{a_j} b_r$, we have
$\Mopt(a_i)\succ_{a_j}\Mopt(a_j)$. Therefore, we derive that 
\begin{align*}
    d(a_i,\widetilde M(a_i))
    &\leq d(a_i,\Mopt(a_i))+d(a_j,\Mopt(a_i))+d(a_j,\Mopt(a_j)) \\
    &\leq d(a_i,\Mopt(a_i))+2d(a_j,\Mopt(a_j)). \qedhere
\end{align*}

\end{proof}

\section{Matching with Predictions}\label{sec:predictions}
We now consider the setting where an ordinal matching mechanism can be augmented
by a prediction. While there are several potential kinds of 
predictions that one could consider, we take a prediction to be a matching denoted by $\widehat{M}$. This design choice is in line with prior work on learning-augmented mechanism design \citep{Agrawal:2022aa}, where predictions are often taken to be an aggregate of information in the outcome space. In \Cref{app:other-predictions}, we briefly discuss how other prediction settings (e.g., locations of agents, items, etc.) would affect our results.

Given a predicted matching $\widehat{M}$, we propose a simple refinement of \Cref{alg:reserved-outliers}. Our augmented \textit{OrderMatch} mechanism uses the prediction in two ways to create a matching. First, outlier items, i.e., those in $\Bout$, will be reserved for the agents indicated by the prediction $\widehat{M}$.
Second, the
prediction $\widehat{M}$ is used to convert a weak order of the remaining agents to a total order over the remaining agents. The formal description of our mechanism is given in \Cref{alg:prediction-order-match}.

\begin{center}
\myalg{alg:prediction-order-match}{OrderMatch with Predictions}{
\textbf{Input:} A preference table $\sigma$ and predicted matching $\widehat{M}$.
\begin{enumerate}[noitemsep,leftmargin=*]
    \item Compute $\Bin$, $\Bout$, the order $\pi_B$ of $\Bin$, and the weak
    agent order $\pi_A$.
    \item Let $A_{\mathrm{res}}=\{a\in A:\widehat M(a)\in\Bout\}$. For every
    $a\in A_{\mathrm{res}}$, set $M_1(a)=\widehat M(a)$.
    \item Refine the weak order $\pi_A$ on $A\setminus A_{\mathrm{res}}$ as
    follows: if two agents have different top items, use their order in
    $\pi_A$; if two agents have the same top item, order them according to the
    order of their predicted items $\widehat M(a)$ in $\pi_B$.
    \item Let $M_2$ be the assortative matching between
    $A\setminus A_{\mathrm{res}}$ and $\Bin$ according to this
    prediction-refined agent order and the item order $\pi_B$.
    \item Return $M=M_1\cup M_2$.
\end{enumerate}
}
\end{center}
Our augmented mechanism uses the predicted matching to overcome a few known limitations of the standard \textit{OrderMatch} mechanism. One shortcoming of the original mechanism was its inability to correctly situate the outlier items in $\Bout$. 
These items can be assigned arbitrarily to get a distortion of $3$, but an improvement would seemingly require one to determine whether the outlier is to the left of $b_\ell$ or to the right of $b_r$. Another shortcoming of the original mechanism is its reliance on the weak ordering of agents deduced from $\Bin$. If several agents had the same top item, then the mechanism could incorrectly order these agents and thus may fail to assortatively match agents and items in the way that $M^*$ does. By offloading both the assignment of outlier items and the ordering of the agents to the prediction $\widehat{M}$, \Cref{alg:prediction-order-match} addresses these two issues. In the next section, we show that this approach leads to improved distortion guarantees.

\subsection{Distortion of OrderMatch with Predictions}
We are now ready to prove our main result.
Below, we show that the distortion of \Cref{alg:prediction-order-match} degrades gracefully as the prediction becomes less accurate, and is bounded above by a constant.

\begin{theorem}\label{thm:prediction-best-of-both}
For every instance $\sigma$ and predicted perfect matching $\widehat M$, let
$M$ be the matching returned by \Cref{alg:prediction-order-match} on input
$(\sigma,\widehat M)$. Then, for every $k\in[n]$, letting
$\eta_k=\distortion_k(\widehat M,\sigma)$, we have
\[
    \distortion_k(M,\sigma)\leq \min\{\eta_k,3\}.
\]
\end{theorem}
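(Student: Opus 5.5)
The bound $3$ is immediate. \Cref{alg:prediction-order-match} is a particular run of \Cref{alg:reserved-outliers}: the set $A_{\mathrm{res}}=\{a:\widehat M(a)\in\Bout\}$ has cardinality $|\Bout|$, and its agents receive the outliers bijectively. The remaining agents are matched assortatively to $\Bin$ according to $\pi_A$, with ties among agents sharing a top item broken in one particular way. Hence \Cref{thm:reserved-outliers} gives $\distortion_k(M,\sigma)\le 3$. The substantive part is showing $\distortion_k(M,\sigma)\le\eta_k$. For this I will fix $d\in\rho(\sigma)$ and prove $\ctopk(M,d)\le\ctopk(\widehat M,d)$ for every $k$. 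Dividing by the optimum and taking the supremum over $d$ then gives the claim.

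To prove $\ctopk(M,d)\le\ctopk(\widehat M,d)$, I will transform $\widehat M$ into $M$ by a sequence of uncrossings and apply \Cref{lem:uncrossing} at each step. Fix $d$, and orient $\pi_B$ consistently with the true order (reversing both orders changes nothing). First, $M$ and $\widehat M$ agree on $A_{\mathrm{res}}$. Moreover, $\widehat M$ maps $A\setminus A_{\mathrm{res}}$ bijectively onto $\Bin$, as does $M$. So it suffices to compare the two restrictions to $A' = A\setminus A_{\mathrm{res}}$ and $\Bin$; the top-$k$ cost comparison passes through because swaps leave the $A_{\mathrm{res}}$ terms untouched. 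On $A'$, I will show that $M$ is exactly what one obtains from $\widehat M|_{A'}$ by repeatedly uncrossing pairs whose assignments are crossed with respect to the refined order. The refined order is a total order on $A'$ with two properties. Across different top-item blocks it agrees with the true agent order, by \Cref{prop:weak-agent-order}(2). Within a block it is the order of the $\widehat M$-items. The assortative matching $M_2$ is the unique matching of $A'$ to $\Bin$ that is monotone with respect to this refined order. Any other bijection, in particular $\widehat M|_{A'}$, contains a pair $a,a'$ with $a$ before $a'$ in the refined order but $\widehat M(a')<\widehat M(a)$. Swapping that pair strictly reduces the number of inversions, so finitely many swaps reach $M_2$.

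The main obstacle is that \Cref{lem:uncrossing} needs crossings in the \emph{true} agent order, while the swaps above are defined by the refined order. For agents in different blocks the two orders agree, so those swaps are genuine uncrossings. For two agents $a,a'$ in the same block, the refined order places them according to their current items, not their true positions. Here I will use that their $\widehat M$-items determine the within-block refined order. Within a block, $M_2$ is the monotone rearrangement, i.e.\ it keeps the agent-to-item correspondence induced by $\widehat M$'s relative order on that block. So I will organize the swap sequence so that only cross-block swaps are ever needed. Concretely, I will perform bubble-sort only on pairs from different blocks, and check that when no cross-block inversion remains, each block holds exactly the contiguous set of items that $M_2$ assigns it, in the same within-block relative order as under $\widehat M$. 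Since within-block relative order is never altered by cross-block swaps of adjacent inverted pairs (I will verify this invariant), the final matching equals $M_2$.

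If the invariant fails, my fallback is to argue directly with costs. Two agents sharing a top item $g$ both lie near $g$, so a same-block swap can be compared to the true order, and I would show that same-block exchanges do not increase cost relative to $\widehat M$'s arrangement. I expect this bookkeeping about within-block orders to be the step needing the most care.
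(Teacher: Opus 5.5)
Your proposal is correct and follows essentially the same route as the paper. The bound $3$ comes from viewing \Cref{alg:prediction-order-match} as a run of \Cref{alg:reserved-outliers}, and the bound $\eta_k$ from a bubble sort starting at $\widehat M$ that swaps only item-adjacent agents from different top-item blocks, each such swap being an uncrossing by \Cref{prop:weak-agent-order} and \Cref{lem:uncrossing}. The invariant you flag does hold, since adjacent cross-block swaps never change the relative order of two agents in the same block, and once no adjacent cross-block inversion remains the blocks appear in $\pi_A$ order; so your fallback argument is unnecessary.
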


\begin{proof}

Fix an instance $\sigma$, a predicted perfect matching $\widehat M$,
$k\in[n]$, and $d\in\rho(\sigma)$. Let $M$ be the matching returned by
\Cref{alg:prediction-order-match}, and let $\Mopt$ be the optimal assortative matching
under $d$. Our approach consists of two parts. We first bound the cost with respect to the optimum matching and then with respect to the prediction.

\paragraph{Distortion with respect to $\Mopt$.}
Let $A_{\mathrm{res}}=\{a\in A:\widehat M(a)\in\Bout\}$. Since $\widehat M$ is
perfect, it matches $A_{\mathrm{res}}$ bijectively to $\Bout$. On
$A\setminus A_{\mathrm{res}}$,
\Cref{alg:prediction-order-match} assortatively matches agents to $\Bin$
according to a refinement of the recovered weak agent order. Hence $M$ can be
obtained from \Cref{alg:reserved-outliers} by reserving the items in $\Bout$ for
$A_{\mathrm{res}}$ and using $\pi_A$ and $\pi_B$ on the remaining agents and
items. By
\Cref{thm:reserved-outliers}, $\ctopk(M,d)\leq 3\ctopk(\Mopt,d)$, so
$\distortion_k(M,d)\leq 3$.

\paragraph{Distortion with respect to $\widehat{M}$.}
The matchings $M$ and $\widehat M$ agree on every agent in
$A_{\mathrm{res}}$, so we focus on comparing the costs of agents in
$A\setminus A_{\mathrm{res}}$.
Orienting the line if necessary, assume that the item order $\pi_B$ used by
the algorithm lists the items of $\Bin$ as $b_1,b_2, \ldots,b_m$ from left to right.
List the agents of $A\setminus A_{\mathrm{res}}$ as $a_1,a_2, \ldots,a_m$ so that
$\widehat M(a_i)=b_i$ for every $i\in[m]$.

Recall that \Cref{alg:prediction-order-match} orders agents with different top items using $\pi_A$ and breaks ties among agents with the same top item using the order of their predicted items.
We will reason about a sorting algorithm that starts with the initial sequence of agents $a_1, a_2, \dots, a_m$ and terminates with the refined sequence of agents used to compute $M$. This sorting procedure implicitly defines a sequence of matchings $M_0, M_1, \dots, M_T$ where $M_0 = \widehat M$ and $M_T = M$.
It suffices to show that each update in this sorting procedure weakly
decreases the top-$k$ cost. In other words, it suffices to show that for every
$t\in\{0,\ldots,T-1\}$,
\begin{equation}\label{eq:prediction-swap-invariant}
    \ctopk(M_{t+1},d)\leq \ctopk(M_t,d).
\end{equation}
Indeed, applying \eqref{eq:prediction-swap-invariant} inductively gives
\[
    \ctopk(M,d)=\ctopk(M_T,d)\leq \ctopk(M_0,d)=\ctopk(\widehat M,d),
\]
and therefore
$\distortion_k(M,d)\leq \distortion_k(\widehat M,d)$.

We now describe the sorting procedure and verify
\eqref{eq:prediction-swap-invariant}. Throughout this procedure, the
assignments of agents in $A_{\mathrm{res}}$ remain fixed. At step $t$, the current sequence
of agents is
\[
    M_t(b_1),M_t(b_2),\ldots,M_t(b_m).
\]
Suppose there exists an index $j\in[m-1]$ such that the agents
$x=M_t(b_j)$ and $y=M_t(b_{j+1})$ have different top items and $y$ precedes
$x$ in the weak order $\pi_A$. We obtain $M_{t+1}$ from $M_t$ by swapping the
assignments of $x$ and $y$. Equivalently, we swap these two adjacent agents in
the sequence. As long as such an index $j$ exists, we repeat this step. Since
each swap decreases the number of pairs of agents with different top items
that are reversed with respect to $\pi_A$, the process terminates after
finitely many steps. Moreover, since agents with the same top item are never
swapped, the final sequence is ordered according to $\pi_A$, with agents with
the same top item ordered according to their predicted items. Thus the
terminal matching is exactly $M$.

It remains to check \eqref{eq:prediction-swap-invariant} for each adjacent
swap. Consider a swap at index $j$ between agents 
$x=M_t(b_j)$ and $y=M_t(b_{j+1})$. Since $y$ precedes $x$ in $\pi_A$ and the
agents have different top items, \Cref{prop:weak-agent-order} implies that
$y<x$ in the true line order. On the other hand, $b_j<b_{j+1}$. Before the
swap, the relevant assignments are $x\mapsto b_j$ and
$y\mapsto b_{j+1}$; after the swap, they are $y\mapsto b_j$ and
$x\mapsto b_{j+1}$. Thus the $(y,x)$-swap is an uncrossing, so
\Cref{lem:uncrossing} gives exactly
\eqref{eq:prediction-swap-invariant}.

\medskip 
Combining the two bounds, we have
$\distortion_k(M,d)\leq\min\{\distortion_k(\widehat M,d),3\}$ for all $d\in\rho(\sigma)$. Taking suprema
over $d$ gives
$\distortion_k(M,\sigma)\leq\min\{\distortion_k(\widehat M,\sigma),3\}
=\min\{\eta_k,3\}$.  \qedhere

\end{proof}

Our main theorem also translates to a best of both worlds guarantee with respect to consistency and robustness. By achieving $1$-consistency and $3$-robustness, we resolve an open question of \citet{Filos-Ratsikas:2025aa} regarding the possibility of obtaining a consistency strictly better than $3$ in metric matching on the line.

\begin{corollary}[Best of both worlds] \label{cor:prediction-best-of-both}
For every $k \in [n]$, the mechanism $F$ defined in \Cref{alg:prediction-order-match} satisfies $\consistency_k(F) = 1$ and $\robustness_k(F) = 3$.
\end{corollary}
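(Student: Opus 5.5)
The corollary has two halves. Each needs an upper bound, which comes from \Cref{thm:prediction-best-of-both}, and a matching lower bound, which comes from a tight instance.

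\textbf{Consistency.} For the upper bound, fix any $\sigma$, any $\widehat M$, and any $d\in\rho(\sigma,\widehat M)$. By definition $\widehat M$ is optimal under $d$ for the relevant top-$k$ objective, so $\distortion_k(\widehat M,d)=1$. The proof of \Cref{thm:prediction-best-of-both} establishes the pointwise inequality $\distortion_k(M,d)\le\min\{\distortion_k(\widehat M,d),3\}$ for each fixed $d\in\rho(\sigma)$. I will invoke this per-metric form rather than the supremum form, because $\eta_k$ is a supremum over all of $\rho(\sigma)$ and could exceed $1$. This gives $\distortion_k(F(\sigma,\widehat M),d)\le 1$, so $\consistency_k(F)\le 1$. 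The lower bound $\consistency_k(F)\ge 1$ is trivial, since every distortion ratio is at least $1$: the denominator is the minimum cost. One subtlety needs checking. The set $\rho(\sigma,\widehat M)$ is defined via optimality, and I read this as top-$k$ optimality for the $k$ at hand. Under either reading, $\widehat M$ has ratio $1$ for that $k$, so the bound goes through.

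\textbf{Robustness.} For the upper bound, \Cref{thm:prediction-best-of-both} gives $\distortion_k(F(\sigma,\widehat M),\sigma)\le 3$ for every $\sigma$ and $\widehat M$, hence $\robustness_k(F)\le 3$. This is the step I expect to be the main obstacle: the lower bound $\robustness_k(F)\ge 3$, which must hold for every $k$. I plan to appeal to the known result \citep{Filos-Ratsikas:2025aa,Caragiannis:2024ab} that no deterministic ordinal mechanism has distortion below $3$ on the line.

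The reduction works as follows. Fix an arbitrary prediction map, for instance a constant $\widehat M_0$ depending only on $\sigma$. Then $\sigma\mapsto F(\sigma,\widehat M_0)$ is a deterministic prediction-free mechanism, so its worst-case distortion is at least $3$. Since robustness takes the supremum over all $\widehat M$, it is at least this value. The remaining concern is that the cited lower bound might be stated only for a particular $k$, such as the sum, $k=n$. If so, I would reproduce a small tight instance for general $k$. With $n=2$, take two agents with identical preferences over two items, where placements consistent with $\sigma$ force ratio $3-\varepsilon$ for whichever matching is chosen. Then pad the instance with $n-2$ far-away agent–item pairs at near-zero cost so that the top-$k$ cost is dominated by the hard pair. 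Letting $\varepsilon\to0$ gives a supremum of $3$. Combining the upper and lower bounds yields $\robustness_k(F)=3$.
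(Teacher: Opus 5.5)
Your proposal is correct and follows the paper's proof. Both upper bounds come from the per-metric inequality $\distortion_k(M,d)\leq\min\{\distortion_k(\widehat M,d),3\}$ established inside the proof of \Cref{thm:prediction-best-of-both}, the consistency lower bound is trivial, and the robustness lower bound comes from fixing a prediction $\widehat M_0$ to get a deterministic ordinal mechanism and invoking the known lower bound of $3$. Your fallback padded two-agent construction is a harmless extra, but note that the placement giving ratio near $3$ for $k=1$ differs from the one for $k\geq 2$, so it must be chosen according to $k$.
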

\begin{proof}
For any fixed instance $\sigma$ and prediction $\widehat{M}$, the proof of \Cref{thm:prediction-best-of-both} shows the matching $M$ returned by $F(\sigma, \widehat{M})$ satisfies 
$\distortion_k(M,d)\leq 3$ for every consistent metric $d$. 
This implies $3$-robustness.
The same proof shows that if the prediction $\widehat{M}$ is optimal for the realized metric $d$, then $\distortion_k(M,d)\leq \distortion_k(\widehat M,d)=1$. This implies $1$-consistency.
The reverse inequality for consistency follows since distortion is always at
least $1$. The reverse inequality for robustness follows from the known lower
bound of $3$ for deterministic ordinal mechanisms on the line
\citep{Filos-Ratsikas:2025aa,Caragiannis:2024ab}: fixing any
prediction $\widehat M_0$ turns $F(\cdot,\widehat M_0)$ into a deterministic
ordinal mechanism, so the robustness of $F$ is at least $3$.
\end{proof}

\section{Discussion}
In this work, we give a learning-augmented matching mechanism with a
best-of-both-worlds distortion guarantee on the line, answering an open
question of \citet{Filos-Ratsikas:2025aa}. We view our result as a proof of
concept for applying the learning-augmented framework to
metric matching problems with ordinal preferences. 

It would be interesting to understand the tradeoffs between consistency and
robustness for learning-augmented matching in more general metric spaces.
Before doing so, however, the prediction-free problem itself still needs to be
better understood: closing the gap between the best known
upper and lower bounds on distortion, $\widetilde{O}(n)$ and
$\Omega(\log n)$, respectively, remains a central open question.
Determining the right worst-case guarantee is a natural prerequisite for asking
what predictions can add.

At the same time, our analysis suggests a design principle for more general
metric spaces. Our mechanism follows a familiar \emph{bounding-box} pattern from
learning-augmented algorithms: identify a robust family of plausible solutions,
then use the prediction to select among them. Here,
\Cref{alg:reserved-outliers} supplies such a family of low-distortion matchings.
This suggests seeking ordinal mechanisms for general metrics that produce viable
families rather than a single output. In voting, \citet{Berger:2024aa} use the
$(p,q)$-veto core \citep{Kizilkaya:2023ab} to produce viable candidate sets. 
A natural question is whether RepMatch and its variants
\citep{Anari:2023aa,Hastings:2025aa} can similarly produce viable matching
families, ideally ones containing an optimal matching when the prediction is
accurate.

\bibliography{mybib}
\bibliographystyle{apalike}

\appendix

\section{AI Disclosure}
The authors used GPT-5.5 Extra High in Codex extensively throughout the
research process for brainstorming, generating and refining proof sketches,
organizing arguments, and streamlining the presentation.
Notably, the model assisted in writing up the swap arguments used in
\Cref{prop:cleanup} and \Cref{thm:prediction-best-of-both} from
author-provided directions and proof ideas, and in creating the illustrative
figure accompanying \Cref{prop:cleanup}. It also helped identify and
formalize adaptations of the lower-bound construction of
\citet{Filos-Ratsikas:2025aa} for alternative prediction models discussed 
in \Cref{app:other-predictions}. The
authors accept full responsibility for the correctness of the work.

\section{Missing Proofs from \texorpdfstring{\Cref{sec:prelims}}{Preliminaries}}\label{app:prelims}

\uncrossinglemma*
\begin{proof}[Proof of \Cref{lem:uncrossing}]
Note that only the individual costs for agents $a_i$ and $a_j$ change.
Consider the two-agent subinstance with agents $a_i,a_j$ and items
$M(a_j),M(a_i)$. Since $a_i<a_j$ and $M(a_j)<M(a_i)$, the swap changes the
crossed matching into the assortative matching on this subinstance. By
\Cref{lem:assortative}, this weakly decreases both the top-$1$ and top-$2$
costs of these two agents; that is,
\begin{align}
    \max\{d(a_i,M(a_j)),d(a_j,M(a_i))\}
    &\leq \max\{d(a_i,M(a_i)),d(a_j,M(a_j))\}, \label{eq:crossing-max}\\
    d(a_i,M(a_j))+d(a_j,M(a_i))
    &\leq d(a_i,M(a_i))+d(a_j,M(a_j)). \label{eq:crossing-sum}
\end{align}
Now let $S$ be a set of $k$ agents attaining $\ctopk(M',d)$. If $S$ contains neither affected agent, take $T=S$. Otherwise, applying
\eqref{eq:crossing-max} if $|S\cap\{a_i,a_j\}|=1$ and
\eqref{eq:crossing-sum} if $|S\cap\{a_i,a_j\}|=2$, and replacing the affected
agent in $S$ in the first case if necessary, we obtain a set $T$ of $k$ agents
such that
\[
    \ctopk(M',d)
    =
    \sum_{a\in S} d(a,M'(a))
    \leq
    \sum_{a\in T} d(a,M(a))
    \leq
    \ctopk(M,d),
\]
which implies our desired result.
\end{proof}

\section{Missing Proofs from \texorpdfstring{\Cref{sec:line}}{Matching without Predictions}}\label{app:order-match}
\weakitemorderlemma*
\begin{proof}[Proof of \Cref{lem:weak-item-order}]
We present an argument similar in spirit to order-recovery procedures
of \citep{Elkind:2014aa, Babashah:2025aa}, but we give a direct proof tailored
to the set $\Bin$.

First, note that if $b_\ell=b_r$, then $B_+=\{b_\ell\}$, so
$\Bin=\{b_\ell\}$ and there is nothing to prove. Hence, we assume
$b_\ell < b_r$.
The high-level idea is to first partition the items in $\Bin$ based on their
locations relative to $b_\ell$ and $b_r$ on the line and then sort each of these
sets separately. In what follows, let $a_\ell$ and $a_r$ be agents such that
$\top_{a_\ell}(B)=b_\ell$ and $\top_{a_r}(B)=b_r$.

From the preference profile, we form the sets
\[
    L=\{b\in\Bin: b\prec_a b_\ell \text{ for every } a\in A\},
    \qquad
    R=\{b\in\Bin: b_r\succ_a b \text{ for every } a\in A\},
\]
and let $M=\Bin\setminus(L\cup R)$.
\begin{claim}
\label{claim:partition}
The sets $L$, $M$, and $R$ partition $\Bin$ into the items lying strictly left
of $b_\ell$, in the interval $[b_\ell,b_r]$, and strictly right of $b_r$,
respectively.
\end{claim}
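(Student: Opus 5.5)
We must show that $L$, $M$, $R$ are exactly the items of $\Bin$ strictly left of $b_\ell$, in $[b_\ell,b_r]$, and strictly right of $b_r$. Since $M=\Bin\setminus(L\cup R)$, it suffices to prove two things: $L$ is exactly the set of $b\in\Bin$ with $b<b_\ell$, and $R$ is exactly the set of $b\in\Bin$ with $b>b_r$. The two are symmetric, so I treat $L$. (Here $b_\ell<b_r$ is assumed, as in the surrounding proof.)

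\emph{Inclusion $L\subseteq\{b<b_\ell\}$.} Take $b\in L$, and suppose for contradiction that $b\ge b_\ell$. Since $b\neq b_\ell$ (no agent ranks $b_\ell$ strictly below itself), we have $b>b_\ell$.
\begin{itemize}
\item If $b\le b_r$, then $b$ lies between $b_\ell$ and $a_r$'s position. Since $b_r=\top_{a_r}(B)$, the agent $a_r$ prefers $b$ to $b_\ell$ unless it is equidistant from both. In the equidistant case we argue via $a_\ell$ instead, or use that $b$ is strictly closer to $a_r$ than $b_\ell$ whenever $a_r\ge b$. More carefully: $a_r$ lies on the side of the midpoint of $b_\ell,b_r$ nearer $b_r$, so $d(a_r,b)<d(a_r,b_\ell)$ for $b_\ell<b\le b_r$. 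This gives $b\succ_{a_r}b_\ell$, contradicting $b\in L$.
\item If $b>b_r$, then $b\in\Bin$ gives some agent $a$ with $b\succeq_a b_\ell$ or $b\succeq_a b_r$. By \Cref{claim:exterior-dominance}, $d(a,b_r)\le d(a,b)$ for every $a$. Combined with consistency, this forces $b\succeq_a b_\ell$ for some $a$ with $d(a,b)\le d(a,b_\ell)$. But $d(a,b_r)\le d(a,b)$ then makes $b_r$ at least as close as $b_\ell$, so $a$ lies right of the midpoint of $b_\ell,b_r$. I need $b\prec_{a'} b_\ell$ for \emph{every} $a'$, and I do not. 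I will therefore present this bullet more cleanly as an exhaustive check: $b\in L$ requires $b\prec_a b_\ell$ for all $a$, and any $b>b_r$ violates this for $a=a_r$, since $a_r$ is near $b_r$ and hence closer to $b$ than to $b_\ell$. Proving that last claim is the subtle point: $a_r$'s top item is $b_r$, so $a_r>\tfrac{b_\ell+b_r}{2}$, but $b$ may lie very far right.
\end{itemize}
So I will use the genuine witness. $b\in\Bin$ gives an agent $a$ with $b\succeq_a b_\ell$ or $b\succeq_a b_r$. In the second case, \Cref{claim:exterior-dominance} and distinct locations give $d(a,b)=d(a,b_r)$, which places $a$ right of $b_r$ with $a\ge (b_r+b)/2$. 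Then $a$ is closer to $b$ than to $b_\ell$, so $b\succeq_a b_\ell$ in either case, contradicting $b\in L$ (up to ties, which are excluded by the strict ranking $b\prec_a b_\ell$).

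\emph{Inclusion $\{b<b_\ell\}\subseteq L$.} Take $b\in\Bin$ with $b<b_\ell$. By \Cref{claim:exterior-dominance}, $d(a,b_\ell)\le d(a,b)$ for all $a$. Because agents rank by distance and locations are distinct, I expect this to give $b\prec_a b_\ell$ for every $a$. The only exception would be an exact tie $d(a,b_\ell)=d(a,b)$, which forces $a$ to sit at the midpoint of $b$ and $b_\ell$, left of $b_\ell$. Such an $a$ would then have top item at most $b_\ell$. To exclude this, I use the claim's proof: the inequality there is strict unless $\top_a(B)=b_\ell$ with ties. I expect the consistency/no-collocation conventions to exclude this tie, or the preference list of $a$ itself to break it, and I will adopt the convention that ties are resolved by the list.

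\emph{Main obstacle.} The main obstacle is tie-handling between the predicates $\succ$ and $\succeq$ in the definitions of $\Bin$, $L$ and $R$. Ordinal lists may break distance ties either way, so every inequality derived from distances has to be matched with the correct strict or weak preference.
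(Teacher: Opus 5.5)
Your reduction is fine: once $L$ and $R$ are characterized exactly, $M=\Bin\setminus(L\cup R)$ is forced. The inclusion $L\subseteq\{b\in\Bin: b<b_\ell\}$ also goes through. The reverse inclusion, however, has a genuine gap at exactly the tie you flag.

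For $b\in\Bin$ with $b<b_\ell$, exterior dominance gives only $d(a,b_\ell)\le d(a,b)$. This yields $b\prec_a b_\ell$ only when the inequality is strict. ``Adopting the convention that ties are resolved by the list'' does not close the gap. Consistency only requires $b\succ_a c\Rightarrow d(a,b)\le d(a,c)$, so an agent equidistant from $b$ and $b_\ell$ may a priori rank either one first. Neither consistency nor non-collocation rules out such an agent. You need a reason why its list ranks $b_\ell$ first.

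That reason comes from the minimality of $b_\ell$ in $B_+$. A tied agent $a$ sits at $(b+b_\ell)/2<b_\ell$, so every item to the right of $b_\ell$ is strictly farther from $a$ than $b_\ell$, whence $\top_a(B)\le b_\ell$. Since $b_\ell$ is the leftmost top-choice item, $\top_a(B)=b_\ell$, and therefore $b_\ell\succ_a b$.

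This is essentially the paper's argument, which skips distances entirely: if $b\succ_a b_\ell$ for some $a$, then $a<b_\ell$, which forces $\top_a(B)=b_\ell$, contradicting $b\succ_a b_\ell$. The mirrored fix, using that $b_r$ is rightmost in $B_+$, is needed for $R$.

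Two smaller points. First, in your first bullet, $d(a_r,b)<d(a_r,b_\ell)$ can fail at $b=b_r$ when $a_r$ sits exactly at the midpoint of $b_\ell$ and $b_r$. That case is immediate because $b_r=\top_{a_r}(B)$. For $b_\ell<b<b_r$, strictness holds since $a_r>b$; otherwise $b$ would be strictly closer to $a_r$ than $b_r$.

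Second, in your second bullet, the intermediate claim that $a_r$ witnesses $b\notin L$ for every $b>b_r$ is false, since $b$ can lie arbitrarily far right; drop it. Your final witness argument is correct and simpler than you make it. If $b\succeq_a b_r$ with $b>b_r$, then $a\ge(b_r+b)/2>b_r$, so $d(a,b)\le d(a,b_r)<d(a,b_\ell)$ and hence $b\succ_a b_\ell$. No appeal to exterior dominance is needed.
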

\begin{proof}
First, we show that if $b\in\Bin$ and $b<b_\ell$, then $b\in L$.
Suppose for contradiction that $b\succ_a b_\ell$ for some agent $a$. Then it
must be that $a<b_\ell$. If $\top_a(B)=b_\ell$, this contradicts
$b\succ_a b_\ell$. Otherwise, since no item to the right of $b_\ell$ can be
ranked above $b_\ell$ by an agent to the left of $b_\ell$, we must have
$\top_a(B)<b_\ell$, contradicting the choice of $b_\ell$ as the leftmost
top-choice item. Hence $b\prec_a b_\ell$ for every agent $a$, which implies
$b\in L$.
Since $b\in\Bin$ and no agent has $b\succeq_a b_\ell$, some agent must have
$b\succ_a b_r$. Therefore $b\notin R$, so $b\in L\setminus R$ and, by
definition, $b\notin M$.

An analogous argument shows that if $b\in\Bin$ and $b>b_r$, then
$b\in R\setminus L$.

Now, consider an item $b\in\Bin$ such that $b_\ell<b<b_r$. Since
$\top_{a_r}(B)=b_r$, the agent $a_r$ lies to the right of $b$, and therefore
$b\succ_{a_r} b_\ell$. Thus $b\notin L$. Similarly, $a_\ell$ lies to the left
of $b$, so $b\succ_{a_\ell} b_r$ and $b\notin R$. Finally, the endpoints
$b_\ell$ and $b_r$ are not in $L\cup R$. Hence every item of $\Bin$ in
$[b_\ell,b_r]$ belongs to $M$, and the claimed partition follows.
\end{proof}

Now, let $\pi_L$ be the reverse of agent $a_r$'s preference
list restricted to $L$. Let $\pi_M$ and $\pi_R$ be agent $a_\ell$'s preference
list restricted to $M$ and $R$, respectively. Our algorithm returns the
concatenation $\pi_B=\pi_L\circ\pi_M\circ\pi_R$.
If the initial labels of the two extreme items $b_\ell$ and $b_r$ are reversed,
the same procedure returns the reversal of this order.

\begin{claim}
\label{claim:sorted-items}
The order $\pi_B$ is the true left-to-right ordering of $\Bin$.
\end{claim}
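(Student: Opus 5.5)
By \Cref{claim:partition}, $L$, $M$, and $R$ already lie left to right on the line, in that order. So it suffices to show that $\pi_L$, $\pi_M$, and $\pi_R$ each sort their own part correctly; the concatenation is then the true left-to-right order. Throughout I fix $a_\ell$ and $a_r$ with $\top_{a_\ell}(B)=b_\ell$ and $\top_{a_r}(B)=b_r$.

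The key geometric step is to locate these two agents. Since $b_\ell$ is the leftmost item of $B_+$, every item ranked first by an agent lies in $[b_\ell,b_r]$. I claim $a_r$ lies strictly to the right of the midpoint of $b_\ell$ and $b_r$. Otherwise $a_r$ would be at least as close to $b_\ell$ as to $b_r$. Collocation is excluded, so distances to distinct items cannot tie in a way that makes $b_r$ the strict top choice; hence $a_r$ would not have $b_r$ on top. Symmetrically, $a_\ell$ lies strictly left of that midpoint. (If $b_\ell=b_r$ there is nothing to prove, as already noted.)

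Now I treat the three parts in turn.

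\emph{$R$.} All items of $R$ lie to the right of $b_r$, and $a_\ell<b_r$. So for $a_\ell$, distance is increasing in position across $R$, and its preference list restricted to $R$ is exactly the left-to-right order. $L$ is symmetric: every item of $L$ lies left of $b_\ell<a_r$, so $a_r$'s preference on $L$ runs right to left, and its reversal $\pi_L$ is left to right.

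\emph{$M$.} This is the main obstacle, because $a_\ell$ need not lie left of $b_\ell$ and so may sit in the middle of $M$. I would argue that no item of $\Bin$ lies strictly between $a_\ell$ and $b_\ell$ on the same side. Suppose first $a_\ell\ge b_\ell$. Any item $b\in M$ with $b_\ell<b<a_\ell$ satisfies $d(a_\ell,b)<d(a_\ell,b_\ell)$, contradicting that $b_\ell$ is $a_\ell$'s top choice. So every item of $M\setminus\{b_\ell\}$ lies to the right of $a_\ell$. The remaining comparisons are between $b_\ell$ and an item $b>a_\ell$. Here the top-choice property puts $b_\ell$ first, which matches the left-to-right order. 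Items to the right of $a_\ell$ are ranked by increasing position, so $\pi_M$ is correct. If instead $a_\ell<b_\ell$, every item of $M$ lies to the right of $a_\ell$ and the conclusion is immediate.

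Finally, if the algorithm's initial labels of the two extremes are swapped, the same argument with the roles of $b_\ell,b_r$ (and $a_\ell,a_r$) exchanged gives the reversed order. This is consistent with the statement of \Cref{lem:weak-item-order}.
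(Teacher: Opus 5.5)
Your proof is correct and is essentially the paper's argument. The between-block order comes from \Cref{claim:partition}, and the within-block order comes from the observation that no item lies strictly between $a_\ell$ and $b_\ell$. The paper applies this once to all of $M\cup R$, which already gives $a_\ell<b_r$ and makes your separate midpoint step for $R$ redundant; it does the same for $a_r$ on $L\cup M$.

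One small correction to the midpoint step: it only yields $a_\ell\le (b_\ell+b_r)/2\le a_r$, not strict inequalities. An agent exactly at the midpoint is equidistant from $b_\ell$ and $b_r$ and may break the tie either way, and the no-collocation assumption does not rule this out. The weak inequalities are all your $L$ and $R$ arguments actually use, so the proof stands.
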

\begin{proof}
By \Cref{claim:partition}, every item in $L$ lies to the left of every item in
$M$, and every item in $M$ lies to the left of every item in $R$. It remains
to check the order within each block.

Since $\top_{a_\ell}(B)=b_\ell$, no item lies strictly between $a_\ell$ and
$b_\ell$; otherwise that item would be ranked above $b_\ell$ by $a_\ell$. If
$a_\ell\leq b_\ell$, then every item of $M\cup R$ lies to the right of
$a_\ell$, so for any two items $b_i,b_j\in M\cup R$ with $b_i<b_j$, we have
$b_i\succ_{a_\ell} b_j$. If
$b_\ell<a_\ell$, then $b_\ell$ is first in $a_\ell$'s preference list, and
every other item of $M\cup R$ lies to the right of $a_\ell$. Again, for any
two such remaining items $b_i<b_j$, we have $b_i\succ_{a_\ell} b_j$. Thus
$a_\ell$'s preference list orders $M\cup R$ from left to right, and therefore
orders both $M$ and $R$ from left to right.

Similarly, $a_r$'s preference list orders $L$ from right to left, so the
reverse of this restricted list orders $L$ from left to right. It follows that
$\pi_B$ is the true left-to-right ordering of $\Bin$.
\end{proof}

Finally, note that the partitioning and sorting steps clearly run in polynomial
time.
\end{proof}

\outliergeometrylemma*
\begin{proof}[Proof of \Cref{lem:outlier-geometry}]
If $\bout$ lies in the closed interval between $b_\ell$ and $b_r$,
then take any agent $a\in A$. If $a\leq \bout$, then $\bout$ lies between
$a$ and $b_r$, so $\bout\succ_a b_r$ unless $\bout=b_r$, in which case
$\bout\succeq_a b_r$. If instead $\bout<a$, then $\bout$ lies between
$a$ and $b_\ell$, so $\bout\succ_a b_\ell$ unless $\bout=b_\ell$, in which
case $\bout\succeq_a b_\ell$. In either case, $\bout\in\Bin$,
contradicting $\bout\in\Bout$. Thus
every outlier lies outside the interval $[b_\ell, b_r]$.

Now suppose $\bout<b_\ell$ and, for contradiction, that some $b\in\Bin$
satisfies $b<\bout$. Then we have $b<\bout<b_\ell\leq b_r$. Since
$b\in\Bin$, there exists an agent $a\in A$ and an extreme item
$c\in\{b_\ell,b_r\}$ such that $b\succeq_a c$. If $\bout<a$, then
$\bout\succ_a b\succeq_a c$, and therefore $\bout\in\Bin$. If instead
$a\leq \bout$, then $\bout\succeq_a b_\ell$, and again $\bout\in\Bin$. Both
cases contradict $\bout\in\Bout$. Therefore $\bout\leq b$ for every
$b\in\Bin$. This establishes property (1). The proof of property (2) is
analogous.
\end{proof}

\weakagentorderlemma*
\begin{proof}[Proof of \Cref{prop:weak-agent-order}]
Construct $\pi_A$ by ordering agents according to the position of
$\top_a(B)$ in $\pi_B$, tying agents with the same top item. Since
$\top_a(B)\in B_+\subseteq\Bin$ for every agent $a$, this procedure is
well-defined. Property (1) of the lemma immediately follows.
It is also straightforward to verify that this sorting procedure can be done
efficiently.

For property (2), suppose $a_i$ strictly precedes $a_j$ in $\pi_A$. Then
$\top_{a_i}(B)<\top_{a_j}(B)$ in the true left-to-right order. Since each
agent is closest to its top item, consistency gives
\[
    d(a_i,\top_{a_i}(B))\leq d(a_i,\top_{a_j}(B))
    \quad\text{and}\quad
    d(a_j,\top_{a_j}(B))\leq d(a_j,\top_{a_i}(B)).
\]
On the line, these inequalities place $a_i$ weakly to the left of the
midpoint between the two top items and $a_j$ weakly to the right of that same
midpoint. Thus $a_i\leq a_j$. Since agents have distinct locations, we in fact
have $a_i<a_j$, as desired.
\end{proof}

\cleanupproposition*

\begin{proof}[Proof of \Cref{prop:cleanup}]
Orienting the line if necessary, assume that the recovered order $\pi_B$ from
\Cref{lem:weak-item-order} is the true left-to-right order of $\Bin$.
The agents matched to $\Bin$ by \Cref{alg:reserved-outliers} are exactly
$A\setminus A_{\mathrm{res}}$.
Partition the agents $A^*_{\mathrm{in}}$ into blocks according to their
top-choice item, so agents in the same block have the same top choice. Let
$A_1,\ldots,A_r$ be these blocks, ordered according to the recovered item order
from \Cref{lem:weak-item-order}, oriented consistently with the true
left-to-right order. By \Cref{prop:weak-agent-order}, this block order agrees
with the true left-to-right order of the agents.

For each $s\in[r]$, set $B_s=\Mopt(A_s)$. The blocks
$B_1,\ldots,B_r$ partition $\Bin$, since every item in $\Bin$ is matched by
$\Mopt$ to a unique agent in $A^*_{\mathrm{in}}$; in particular,
$|B_s|=|A_s|$. These item blocks inherit the same left-to-right order: if
$s<t$, then every agent in $A_s$ lies to the left of every agent in $A_t$ by
\Cref{prop:weak-agent-order}, and assortativity of $\Mopt$ implies that every
item in $B_s$ lies to the left of every item in $B_t$.

We will transform $M$ by swapping only agents in $A\setminus A_{\mathrm{res}}$; the
assignments of agents in $A_{\mathrm{res}}$ remain fixed throughout. The goal is
to remove cross-block assignments. In the final matching, if an agent in
$A^*_{\mathrm{in}}\cap (A\setminus A_{\mathrm{res}})$ belongs to block $A_s$,
then it should receive an item from $B_s$.

\paragraph{Left and right defects.}
For a matching $N$ that agrees with $M$ on $A_{\mathrm{res}}$, define
\begin{align*}
    D_s^+(N)
    &=
    \{a\in A_s\cap (A\setminus A_{\mathrm{res}}):
        N(a)\in B_t\text{ for some }t>s\}, \\
    D_s^-(N)
    &=
    \{a\in A_s\cap (A\setminus A_{\mathrm{res}}):
        N(a)\in B_t\text{ for some }t<s\}.
\end{align*}
These are the right- and left-defects in block $A_s$, respectively. They record
the agents in $A_s\cap (A\setminus A_{\mathrm{res}})$ whose current item under
$N$ comes from the wrong item block: $D_s^+(N)$ records those receiving an item
from a block to the right of $B_s$, and $D_s^-(N)$ records those receiving an
item from a block to the left of $B_s$. If instead such an agent $a$ has
$N(a)\in B_s$, then the assignment is internal by definition, since
$B_s=\Mopt(A_s)$ is the item block paired with $A_s$ by $\Mopt$. This definition
only compares blocks, so it does not require an ordering of agents inside a
single block $A_s$.

\paragraph{Exterior agents and swap partners.}
For ease of exposition, call the agents in
$(A\setminus A_{\mathrm{res}})\setminus A^*_{\mathrm{in}}$ \textit{exterior}.
These are the agents matched to $\Bin$ by $M$ but matched to $\Bout$ by
$\Mopt$. By assortativity of $\Mopt$ and \Cref{lem:outlier-geometry}, every
exterior agent lies either to the left of all blocks $A_1,\ldots,A_r$ or to
their right; call these left and right exterior agents, respectively. For a
block $A_s$, let $L_s$ be the left exterior agents together with
$\bigcup_{u<s}(A_u\cap (A\setminus A_{\mathrm{res}}))$, and let $R_s$ be the
right exterior agents together with
$\bigcup_{u>s}(A_u\cap (A\setminus A_{\mathrm{res}}))$. Thus $L_s$ and $R_s$
are the agents on the two sides of $A_s$ whose assignments we are still allowed
to change.

We will only use the following two kinds of swaps:
\begin{enumerate}[label=(\roman*),ref=(\roman*), noitemsep]
    \item \label{swap:right-defect}
    To fix a right-defect in $A_s$, swap an agent in $D_s^+(N)$ with an agent
    to the left of $A_s$ that currently holds an item from $B_s$.
    \item \label{swap:left-defect}
    To fix a left-defect in $A_s$, swap an agent in $D_s^-(N)$ with an agent to
    the right of $A_s$ that currently holds an item from $B_s$.
\end{enumerate}
For a matching $N$, the possible partners for these two swaps are
\[
    P_s(N)=\{a\in L_s:N(a)\in B_s\}
    \quad\text{and}\quad
    Q_s(N)=\{a\in R_s:N(a)\in B_s\}.
\]
That is, $P_s(N)$ is the set of possible partners for swaps of
type~\ref{swap:right-defect}, while $Q_s(N)$ is the set of possible partners
for swaps of type~\ref{swap:left-defect}.

\begin{figure}[H]
    \centering
    \begin{tikzpicture}[
        x=1cm,y=1cm,
        block/.style={draw,rounded corners=4pt,minimum height=0.65cm,inner sep=3pt},
        every node/.style={font=\scriptsize}
    ]
        \node[block,minimum width=1.15cm,label=above:$B_s$] (bsbox) at (1.95,1.45) {};
        \node[block,minimum width=1.15cm,label=above:$B_{s+1}$] at (3.15,1.45) {};
        \node[block,minimum width=1.15cm,label=above:$B_t$] (btbox) at (5.05,1.45) {};
        \node (bs) at (bsbox.center) {$b_s$};
        \node (bt) at (btbox.center) {$b_t$};
        \node at (4.1,1.45) {$\cdots$};

        \node[block,minimum width=2.25cm,label=below:$L_s$] (ajbox) at (0.0,0.05) {};
        \node[block,minimum width=1.15cm,label=below:$A_s$] (aibox) at (1.95,0.05) {};
        \node[block,minimum width=1.15cm,label=below:$A_{s+1}$] at (3.15,0.05) {};
        \node[block,minimum width=1.15cm,label=below:$A_t$] at (5.05,0.05) {};
        \node[block,minimum width=1.6cm,label=below:$R_t$] at (6.65,0.05) {};

        \node (aj) at (ajbox.center) {$P_s(N)\ni a_j$};
        \node (ai) at (aibox.center) {$a_i$};
        \node at (4.1,0.05) {$\cdots$};

        \draw[black,densely dotted,line width=0.8pt] (aj.north east) -- (bs.south);
        \draw[black,line width=0.8pt] (ai.north) -- (bt.south);
    \end{tikzpicture}
    \caption{A type~\ref{swap:right-defect} swap. The assignments shown are
    edges of the current matching $N$; initially, $N=M$. The solid edge is the
    right-defect assignment $N(a_i)\in B_t$ for some $t>s$, while the dotted
    edge shows that the partner $a_j\in P_s(N)$ currently holds an item from
    $B_s$.
    Type~\ref{swap:left-defect} is symmetric.}
    \label{fig:right-defect-swap}
\end{figure}

\begin{claim}[Availability of swap partners] \label{claim:block-counts}
For the initial matching $M$ returned by \Cref{alg:reserved-outliers}, every
block $A_s$ satisfies
\begin{equation}\label{eq:partner-availability}
\begin{aligned}
    |P_s(M)|\geq |D_s^+(M)|
    \qquad\text{and}\qquad
    |Q_s(M)|\geq |D_s^-(M)|.
\end{aligned}
\end{equation}
Moreover, if the inequalities in \eqref{eq:partner-availability} hold for a
matching $N$, then they still hold after any swap of
type~\ref{swap:right-defect}; such a swap decreases the total number of
right-defects by one and creates no left-defect. If the second inequality in
\eqref{eq:partner-availability} holds for $N$, then it still holds after any
swap of type~\ref{swap:left-defect}; such a swap decreases the total number of
left-defects by one and creates no right-defect.
\end{claim}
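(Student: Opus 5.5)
The plan is a counting argument for the initial matching $M$, followed by local bookkeeping showing that each swap preserves the inequalities. Orient the line as in the proof, and write $U=A\setminus A_{\mathrm{res}}$. For each block $s$, every item of $B_s\subseteq\Bin$ is held under $M$ by an agent of $U$, since reserved agents hold only outliers. Such a holder lies in exactly one of three groups: $A_s\cap U$, $L_s$, or $R_s$. Holders in $L_s$ form $P_s(M)$ and holders in $R_s$ form $Q_s(M)$. Let $I_s$ be the number of internal agents of $A_s\cap U$, meaning those holding an item of $B_s$. Then
\[
|B_s| = I_s + |P_s(M)| + |Q_s(M)|,
\qquad
|A_s\cap U| = I_s + |D_s^+(M)| + |D_s^-(M)|.
\]
Since $|A_s\cap U|\le |A_s| = |B_s|$, this gives $|P_s(M)|+|Q_s(M)|\ge |D_s^+(M)|+|D_s^-(M)|$.

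It remains to split this sum bound into the two separate inequalities. It suffices to show that $D_s^+(M)$ and $D_s^-(M)$ cannot both be nonempty. Then the nonempty side inherits the sum bound, and the other side reads $\ge 0$.

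Suppose, for contradiction, that $x\in A_s\cap U$ holds an item left of $B_s$ and $y\in A_s\cap U$ holds an item right of $B_s$. Because $M_2$ is assortative with respect to $\pi_A$ and $\pi_B$, every agent holding an item of $B_s$ is weakly between $x$ and $y$ in the refined agent order. Since $x$ and $y$ share a top item, all these holders share that top item as well. I expect the main obstacle here: exterior agents may share this top item, so the holders are not automatically in $A_s$. I would argue as follows. An exterior agent sharing $A_s$'s top item and holding an item of $B_s$ forces, via the outlier geometry (\Cref{lem:outlier-geometry}) and the assortativity of $\Mopt$, that block $s$ is extreme, that is, $s=1$ or $s=r$. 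In that case the corresponding defect side is vacuous: there is no block $t<1$ and no block $t>r$. Otherwise all $|B_s|$ holders lie in $A_s\cap U$. Together with $x$ and $y$, this gives $|A_s|\ge |B_s|+2$, a contradiction.

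For preservation under a type~\ref{swap:right-defect} swap, take $a\in D_s^+(N)$ holding $b\in B_t$ with $t>s$, and a partner $p\in P_s(N)$ holding $b'\in B_s$. After the swap, $a$ holds $b'$ and becomes internal. Hence $|D_s^+|$ and $|P_s|$ each drop by one, so block $s$'s inequality is preserved. If $p\in A_u$ with $u<s$, then $p$ was already a right-defect and remains one, now holding $b\in B_t$ with $t>u$. So no defect count of block $u$ changes, and no left-defect is created. For block $t$, both $a$ and $p$ lie in $L_t$, so $p$ replaces $a$ as a holder of $b$ in $P_t$. Thus $|P_t|$ is unchanged and $Q_t$ is untouched. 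No other block's sets change. The type~\ref{swap:left-defect} case is the mirror image. The partner $q\in Q_s(N)$ lies in $R_s$, so if $q\in A_u$ then $u>s$, and $q$ was a left-defect and stays one after receiving $b\in B_t$ with $t<s$. Block $s$ loses one left-defect and one $Q_s$-partner. In block $t$, $q$ replaces $a$ in $Q_t$. No right-defects are created and no $P$ sets change, so only the second inequality needs to be tracked, as stated.
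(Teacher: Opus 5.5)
\textbf{Gap in splitting the sum bound.} Your counting identities are correct and give $|P_s(M)|+|Q_s(M)|\ge |D_s^+(M)|+|D_s^-(M)|$. Your bookkeeping for the two swap types is also right and matches the paper's. The problem is the step that splits the sum bound. Knowing that $D_s^+(M)$ and $D_s^-(M)$ are not both nonempty is not enough. If, say, $D_s^-(M)=\emptyset$, the sum bound only reads $|P_s(M)|+|Q_s(M)|\ge |D_s^+(M)|$. Nothing you have established excludes $|P_s(M)|=0$ and $|Q_s(M)|=1=|D_s^+(M)|$, which is consistent with both of your identities and with $|A_s\cap U|\le|B_s|$. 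The phrase ``the nonempty side inherits the sum bound'' silently assumes that the partner set on the \emph{opposite} side is empty. That is exactly the structural fact the claim hinges on, and your ``not both nonempty'' argument does not supply it.

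\textbf{The missing fact.} You need: if $D_s^+(M)\neq\emptyset$ then $Q_s(M)=\emptyset$, and symmetrically, if $D_s^-(M)\neq\emptyset$ then $P_s(M)=\emptyset$. This follows from the same ordering considerations you already use.
\begin{itemize}
\item Take $y\in D_s^+(M)$; then $s<r$.
\item Every agent of $R_s$ comes strictly after $y$ in the agent order used by $M_2$. Agents of $A_u$ with $u>s$ have a top item strictly to the right of $A_s$'s.
\item A right exterior agent lies to the right of all of $A^*_{\mathrm{in}}$. By property (2) of \Cref{prop:weak-agent-order}, its top item is therefore weakly right of that of $A_r$, hence strictly right of that of $A_s$.
\item By assortativity of $M_2$, every agent of $R_s$ then holds an item to the right of $M(y)\in B_t$ with $t>s$. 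So none of them holds an item of $B_s$.
\end{itemize}
Substituting $Q_s(M)=\emptyset$ into your sum bound gives $|P_s(M)|\ge |D_s^+(M)|+|D_s^-(M)|$, and the mirror argument gives the second inequality. This makes your ``not both nonempty'' argument, including the detour about exterior agents sharing $A_s$'s top item, unnecessary; that statement now follows as a corollary. This is essentially the content of the paper's position-counting argument. Once some agents of $A_s$ overshoot the right end of $B_s$, the slots of $B_s$ they leave can only be filled by agents listed before $A_s$, never by agents of $R_s$.
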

\begin{proof}
We first prove the two inequalities for the initial matching $M$. For this
part, use the ordered lists in \Cref{alg:reserved-outliers}: the agents in
$A\setminus A_{\mathrm{res}}$ are sorted by top-choice block, while ties inside
a block are arbitrary, and the items in $\Bin$ are sorted from left to right.
Thus the interval $B_s$ appears after $B_1,\ldots,B_{s-1}$ and before
$B_{s+1},\ldots,B_r$.

Fix a block $A_s$. After deleting exterior agents from the ordered agent list,
the remaining agents of $A_s$ form one block, with earlier and later top-choice
blocks on the corresponding sides. Now return to the full ordered list. If
$q=|D_s^+(M)|$ agents of $A_s\cap(A\setminus A_{\mathrm{res}})$ are matched to
items strictly to the right of $B_s$, then at least $q$ positions inside $B_s$
must be occupied by agents that appear on the left side of $A_s$: otherwise the
assignments of $A_s$ could not extend past the right end of $B_s$. These agents
lie in $L_s$ and hold distinct items of $B_s$, so they give $q$ distinct
elements of $P_s(M)$. Hence $|P_s(M)|\geq |D_s^+(M)|$. Reading the same ordered
lists from right to left gives $|Q_s(M)|\geq |D_s^-(M)|$.

Next consider a swap of type~\ref{swap:right-defect}. Thus an agent
$a_i\in D_s^+(N)$ is swapped with an agent $a_j\in P_s(N)$, where
$N(a_i)\in B_t$ for some $t>s$ and $N(a_j)\in B_s$. For block $s$, both
$|D_s^+|$ and $|P_s|$ decrease by one. If $a_j$ is in an earlier block $A_u$,
then before the swap it was already in $D_u^+(N)$ and after the swap it still
is, since it receives an item in $B_t$ with $t>s>u$; if $a_j$ is exterior, it
creates no defect. Thus the swap decreases the total number of right-defects by
one and creates no left-defect. The only other possible change in the partner
sets involves the item of $B_t$, and that item moves from one agent in $L_t$ to
another agent in $L_t$. Hence the inequalities in
\eqref{eq:partner-availability} are preserved.

The argument for a swap of type~\ref{swap:left-defect} is the mirror image, but
we spell out the relevant changes. Now an agent $a_i\in D_s^-(N)$ is swapped
with an agent $a_j\in Q_s(N)$, where $N(a_i)\in B_t$ for some $t<s$ and
$N(a_j)\in B_s$. For block $s$, both $|D_s^-|$ and $|Q_s|$ decrease by one.
If $a_j$ is in a later block $A_u$, then it was already in $D_u^-(N)$ and
remains so after receiving an item in $B_t$ with $t<s<u$; if $a_j$ is exterior,
it creates no defect. Thus the swap decreases the total number of left-defects
by one and creates no right-defect. The only other possible change in the
partner sets involves the item of $B_t$, and that item moves from one agent in
$R_t$ to another agent in $R_t$. Hence the second inequality in
\eqref{eq:partner-availability} is preserved.
\end{proof}

We now use \Cref{claim:block-counts} to transform $M$ in two passes. First we
remove all right-defects; then, keeping right-defects from reappearing, we
remove all left-defects.

\paragraph{First phase: removing right-defects.}
Starting from $M$, let $N$ be the current matching. If there is a right-defect
in some block $A_s$, choose $a_i\in D_s^+(N)$. By
\Cref{claim:block-counts}, $P_s(N)$ is nonempty, so choose
$a_j\in P_s(N)$. Then $a_j$ lies to the left of $a_i$ in the true line order,
and $N(a_j)\in B_s$ lies to the left of $N(a_i)$. Thus the $(a_j,a_i)$-swap is
the reverse of an uncrossing. By \Cref{claim:block-counts}, after this swap the
two inequalities still hold, the total number of right-defects decreases by one,
and no left-defect is created. Repeating this step terminates with no
right-defects.

\paragraph{Second phase: removing left-defects.}
Now suppose no right-defect remains, and again let $N$ be the current matching.
If there is a left-defect in some block $A_s$, choose
$a_i\in D_s^-(N)$. By \Cref{claim:block-counts}, $Q_s(N)$ is nonempty, so choose
$a_j\in Q_s(N)$. Then $a_i<a_j$ in the true line order, and $N(a_i)$ lies to the
left of $N(a_j)\in B_s$. Hence the $(a_i,a_j)$-swap is the reverse of an
uncrossing. By \Cref{claim:block-counts}, the second inequality remains true,
the number of left-defects decreases by one, and no right-defect is created.
Repeating this step terminates with no left-defects. Let $\widetilde M$ be the
resulting matching. Since
$\widetilde M$ has no right- or left-defects, every assignment of an item in
$\Bin$ to an agent in $A^*_{\mathrm{in}}$ is internal.

\medskip 
Finally, note that every step in the reduction from $M$ to $\widetilde M$ is the reverse of an
uncrossing, so by \Cref{lem:uncrossing} each step weakly increases
$\ctopk(\cdot,d)$ for every $k\in[n]$. Therefore, for every $k\in[n]$:
\[
    \ctopk(M,d)
    \leq
    \ctopk(\widetilde M,d).
\]
\end{proof}

\section{Other Prediction Settings}\label{app:other-predictions}

Apart from a matching $\widehat{M}$, there are several other types of
predictions that a matching mechanism could reasonably be expected to receive.
Here, we briefly discuss the most natural alternatives and their implications.
Broadly, the predictions $\widehat{P}$ that we will consider fall into two
categories: those that predict agent/item positions, and those that predict
agent/item identities.

For any such prediction format, let $F(\sigma,\widehat{P})$ be a
deterministic mechanism that receives the ordinal profile $\sigma$ and the
prediction $\widehat{P}$. Let $\rho(\sigma,\widehat{P})$ denote the set of
metrics that are consistent both with $\sigma$ and with the prediction
$\widehat{P}$. The corresponding consistency benchmark is
\[
    \consistency^{\widehat{P}}_k(F)
    =
    \sup_{\sigma,\widehat{P}}
    \sup_{d\in\rho(\sigma,\widehat{P})}
    \distortion_k(F(\sigma,\widehat{P}),d).
\]

\paragraph{Predicted agent locations.}
A mechanism could be given information so that the relative order of agents
could be gleaned. Perhaps the strongest form of this prediction would be the
exact locations of agents on the line. In this setting, the prediction
$\widehat{P}$ is a vector indexed by agents, where $\widehat{P}(a)$ is the
location of agent $a$.

\begin{claim}\label{claim:agent-locations-insufficient}
For every $k\in[n]$, every deterministic mechanism $F$ augmented with the exact
locations of the agents satisfies
$\consistency^{\widehat{P}}_k(F)\geq 3$. Consequently, agent
locations alone cannot yield consistency better than $3$, even if no
robustness requirement is imposed.
\end{claim}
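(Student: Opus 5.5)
Our plan is to adapt the lower-bound construction of \citet{Filos-Ratsikas:2025aa} so that the agent locations are identical across all metrics in the construction. Then the prediction $\widehat P$ carries no extra information, and the usual ordinal lower bound applies verbatim. Concretely, we will build a single instance $\sigma$ together with a fixed location vector $\widehat P$ for the agents, and two (or more) line metrics $d_1,d_2\in\rho(\sigma,\widehat P)$ that place the agents at exactly the positions $\widehat P$ and differ only in where the items are. Since $F(\sigma,\widehat P)$ is deterministic, it outputs one matching $M$ for both metrics. It then suffices to show that every perfect matching $M$ has $\distortion_k(M,d_i)\geq 3-\varepsilon$ for some $i$, and to let $\varepsilon\to 0$.

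The natural candidate comes from the outlier phenomenon of \Cref{sec:line}. We take $n=2$ and agents $a_1$ at $0$ and $a_2$ at $2$. Item $b_1$ sits at $1$, equidistant from both agents and slightly perturbed to break ties so that both agents rank $b_1$ first. The other item $b_2$ is an outlier ranked last by both agents. Under $d_1$ we place $b_2$ at $3+\varepsilon$, to the right of $a_2$, and under $d_2$ we place $b_2$ at $-1-\varepsilon$, to the left of $a_1$. Both agents rank $b_1\succ b_2$ in either metric, so both metrics are consistent with $\sigma$ and with $\widehat P$. Under $d_1$ the optimum matches $a_1\mapsto b_1$ and $a_2\mapsto b_2$, with costs about $1$ and $1$. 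The crossed matching instead gives costs about $3$ and $1$.

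We then check both objectives. For $k=1$ the ratio is $3/1$. For $k=2$ it is $(3+1)/(1+1)=2$, which is only $2$, so this toy instance is not sufficient for $k=2$. This is the main obstacle: for sum-type top-$k$ objectives we need the larger constructions from \citet{Filos-Ratsikas:2025aa,Caragiannis:2024ab}, in which many agents share a top item and the expensive mistake dominates the total cost. Our first attempt will be to take their construction achieving $3-\varepsilon$ for each $k$ and verify that in every metric they use, the agents occupy the same positions, or can be shifted to common positions by moving only items while preserving all rankings. If this fails, we will pad the instance. We will use many agents and items clustered near $0$ and $2$ whose costs are tiny, which makes the relevant top-$k$ sum equal to the two critical agents' costs plus negligible terms. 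We will also replicate the critical gadget $k$ times, spaced far apart, so that the top-$k$ cost of any $M$ picks up one $3$-versus-$1$ loss per gadget under an adversarially chosen metric. Since the mechanism's choices in each gadget are fixed before the metric is chosen, we will pick, per gadget, the outlier side that punishes the mechanism's choice.

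Finally, we conclude that $\sup_{d\in\rho(\sigma,\widehat P)}\distortion_k(F(\sigma,\widehat P),d)\geq 3-O(\varepsilon)$ for all $\varepsilon>0$, so $\consistency^{\widehat P}_k(F)\geq 3$. The "consequently" part of the claim is immediate, because this bound used no robustness requirement.
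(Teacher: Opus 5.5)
\paragraph{Gap: the case $k\geq 2$.}
Your $n=2$ gadget does handle $k=1$. It is essentially the paper's $k=1$ construction, which replaces $a_1$ and $b_1$ by clusters $A_0$ near $0$ and $B_0$ near $1-\varepsilon/2$ to reach arbitrary $n$. But the claim must hold for every $k\in[n]$, including $k=n$, and, as you observe, the gadget gives only $2$ once $k\geq 2$. Neither of your fallbacks closes this.

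\emph{Padding.} Tiny-cost padding cannot raise the ratio, because the optimum still pays about $1$ on \emph{both} critical agents. Padding near $0$ and $2$ also fails to preserve the profile. Agent $a_2$ ranks $b_2$ before padded items near $0$ when $b_2$ is at $3+\varepsilon$, but after them when it is at $-1-\varepsilon$.

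\emph{Replication.} This fails more fundamentally. With agent positions fixed, any agent outside a gadget's narrow window sees that gadget's $b_1$ and $b_2$ swap places in its ranking when the outlier changes side. For example, an agent far to the left prefers $b_2$ at $D-1-\varepsilon$ over $b_1$ at $D+1$, but prefers $b_1$ over $b_2$ at $D+3+\varepsilon$. So the side patterns you want to choose among induce \emph{different} profiles $\sigma$, and the profile reveals each outlier's side. Your sentence ``the mechanism's choices in each gadget are fixed before the metric is chosen'' is therefore false, and the adversary argument collapses. Moreover, $k$ disjoint two-agent gadgets need $n\geq 2k$. For $k>n/2$ this scheme yields at most $1+2\lfloor n/2\rfloor/k<3$, and only $2$ for the sum objective $k=n$. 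Deferring to an unverified adaptation of the construction of \citet{Filos-Ratsikas:2025aa} does not supply a proof either.

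\paragraph{The missing idea.} You need a single gadget with two properties. First, the optimum concentrates essentially all its cost on \emph{one} edge of length about $1$, with every other optimal edge negligible. Second, every matching pays about $2+1$ on two agents in some consistent realization.

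The paper achieves this by hiding the positions of two items at once:
\begin{itemize}
\item $A_0=\{a_1,\ldots,a_{n-1}\}$ sits just left of $0$, $B_0=\{b_2,\ldots,b_{n-1}\}$ sits just right of $0$, and $a_n$ is at $1-\varepsilon$.
\item In $d_R$, $b_1$ joins the near-zero item cluster and $b_n$ is at $2-\varepsilon$.
\item In $d_L$, $b_n$ is at $-1$ and $b_1$ is at $1-\varepsilon+\delta$, next to $a_n$.
\end{itemize}
Both realizations induce the same profile: $A_0$ ranks $B_0$, then $b_1$, then $b_n$, and $a_n$ ranks $b_1$, then $B_0$, then $b_n$. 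In both, $\cost_k(M^*,\cdot)\leq 1+(k-1)n\delta$.

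If $F$ gives $b_n$ to a cluster agent, evaluate on $d_R$: that agent pays about $2$, and $a_n$, now served from the near-zero cluster, pays about $1$. If $F$ gives $b_n$ to $a_n$, evaluate on $d_L$: $a_n$ pays about $2$, and whoever receives $b_1$ pays about $1$. This gives a ratio tending to $3$ for every $k\geq 2$ simultaneously, using exactly $n$ agents.
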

\begin{proof}
We use an indistinguishability argument, in the same spirit as the lower-bound
template of \citet{Filos-Ratsikas:2025aa}. At a high level, the hidden bit is
the side of the outlier item $b_n$. We construct two metrics, $d_L$ and
$d_R$, that have the same prediction $\widehat{P}$ and induce the same ordinal
profile. Since $F$ receives the same input in both metrics, it
must commit to the same recipient of $b_n$. We then evaluate it on the metric
in which that recipient is far from $b_n$, while the optimum matches $b_n$ to
a nearby agent.

We consider cases based on the value of $k$. In what follows, fix
$\varepsilon\in(0,1/2)$ and set $\delta=\varepsilon/(2n^2)$.

\paragraph{Case of $k=1$:}
Let $A_0=\{a_1,\ldots,a_{n-1}\}$ and
$B_0=\{b_1,\ldots,b_{n-1}\}$. Set $\widehat{P}(a_i)=(i-1)\delta$ for
$i\in[n-1]$ and $\widehat{P}(a_n)=2-\varepsilon$. In both metrics, place item
$b_i$ at
\[
    1-\varepsilon/2+(i-1)\delta
    \qquad\text{for every } i\in[n-1].
\]
The metrics differ only in the position of $b_n$: in $d_R$, place $b_n$ at
$3-\varepsilon$; in $d_L$, place $b_n$ at $-1$. For each agent, the order of
the items in $B_0$ is identical in $d_L$ and $d_R$, and the choice of
$\delta$ ensures that every item in $B_0$ is closer than $b_n$ in both
metrics. Thus $d_L$ and $d_R$ have the same predicted locations and induce the
same ordinal profile. See \Cref{fig:agent-location-k1}.

\begin{figure}[H]
    \centering
    \begin{tikzpicture}[
        x=1.15cm,y=0.9cm,
        agent/.style={circle,draw=cyan!60!black,fill=\agentcolor,inner sep=2pt},
        item/.style={circle,draw=black,fill=\itemcolor,inner sep=2pt},
        every node/.style={font=\scriptsize}
    ]
        \begin{scope}
            \draw[black] (-1.45,0) -- (3.35,0);
            \node[item,label=below:$b_n$] at (-1,0) {};
            \node[agent,label=above:$A_0$] at (0,0) {};
            \node[item,label=below:$B_0$] at (1.2,0) {};
            \node[agent,label=above:$a_n$] at (2.4,0) {};
            \node at (1.2,-0.8) {$d_L$};
        \end{scope}
        \begin{scope}[xshift=6.2cm]
            \draw[black] (-0.35,0) -- (3.35,0);
            \node[agent,label=above:$A_0$] at (0,0) {};
            \node[item,label=below:$B_0$] at (1.2,0) {};
            \node[agent,label=above:$a_n$] at (2.4,0) {};
            \node[item,label=below:$b_n$] at (3.2,0) {};
            \node at (1.6,-0.8) {$d_R$};
        \end{scope}
    \end{tikzpicture}
    \caption{The $k=1$ agent-location construction, with $A_0=\{a_1,\ldots,a_{n-1}\}$ and $B_0=\{b_1,\ldots,b_{n-1}\}$. Clusters are drawn as single nodes; in the proof their points are separated by $\delta$.}
    \label{fig:agent-location-k1}
\end{figure}

We first compute $\cost_1(\Mopt,\cdot)$ in the two metrics. In $d_R$, match
$b_n$ to $a_n$ and match $a_i$ to $b_i$ for every $i\in[n-1]$. The edge
incident to $b_n$ has cost $1$, and every other edge has cost
$1-\varepsilon/2$, so $\cost_1(\Mopt,d_R)=1$. In $d_L$, match $b_n$ to
$a_1$, match $a_i$ to $b_{i-1}$ for $i=2,\ldots,n-1$, and match $a_n$ to
$b_{n-1}$. Since $a_1$ is the closest agent to $b_n$ in $d_L$,
$\cost_1(\Mopt,d_L)=1$.

Now let $M$ be the matching returned by $F$ on this common input. If
$M(b_n)\in A_0$, evaluate $M$ on $d_R$. Then
$\cost_1(M,d_R)\geq 3-\varepsilon-(n-2)\delta$; since
$\cost_1(\Mopt,d_R)=1$, the same lower bound holds for
$\distortion_1(M,d_R)$. If $M(b_n)=a_n$, evaluate $M$ on $d_L$. Then
$\cost_1(M,d_L)\geq 3-\varepsilon$, and hence
$\distortion_1(M,d_L)\geq 3-\varepsilon$. Thus, for some
$d\in\{d_L,d_R\}$, $\distortion_1(M,d)\geq
3-\varepsilon-(n-2)\delta$.
Since $\delta=\varepsilon/(2n^2)$, taking $\varepsilon\to0$ makes this lower
bound tend to $3$. This gives the desired lower bound for this case.

\paragraph{Case of $k\geq 2$:}
Let $A_0=\{a_1,\ldots,a_{n-1}\}$ and
$B_0=\{b_2,\ldots,b_{n-1}\}$, possibly empty. Set
$\widehat{P}(a_i)=-(n-i+1)\delta$ for $i\in[n-1]$ and
$\widehat{P}(a_n)=1-\varepsilon$. The items in $B_0$ have the same
locations in both metrics: for $i=2,\ldots,n-1$, place $b_i$ at
$(i-2)\delta$. In the right metric $d_R$, place $b_1$ at
$(n-2)\delta$ and $b_n$ at $2-\varepsilon$. In the left metric $d_L$, place
$b_n$ at $-1$ and $b_1$ at $1-\varepsilon+\delta$.
The ordinal profile is: agents $a_1,\ldots,a_{n-1}$ rank $B_0$ first, then
$b_1$, then $b_n$; agent $a_n$ ranks $b_1$ first, then $B_0$, then $b_n$.
Ties inside $B_0$ follow their induced order. The near-zero cluster
comparisons are immediate in both metrics; the only one to check is, for a
left-cluster agent in $d_L$, $b_1$ versus $b_n$:
\[
    d(a_i,b_1)
    =
    1-\varepsilon+(n-i+2)\delta
    <
    1-(n-i+1)\delta
    =
    d(a_i,b_n),
\]
since $(2n-2i+3)\delta<\varepsilon$. See \Cref{fig:agent-location-kgeq2}.

\begin{figure}[H]
    \centering
    \begin{tikzpicture}[
        x=1.15cm,y=0.9cm,
        agent/.style={circle,draw=cyan!60!black,fill=\agentcolor,inner sep=2pt},
        item/.style={circle,draw=black,fill=\itemcolor,inner sep=2pt},
        every node/.style={font=\scriptsize}
    ]
        \begin{scope}
            \draw[black] (-1.45,0) -- (2.55,0);
            \node[item,label=below:$b_n$] at (-1,0) {};
            \node[agent,label=above:$A_0$] at (-0.2,0) {};
            \node[item,label=below:$B_0$] at (0.25,0) {};
            \node[agent,label=above:$a_n$] at (1.35,0) {};
            \node[item,label=below:$b_1$] at (1.65,0) {};
            \node at (0.55,-0.8) {$d_L$};
        \end{scope}
        \begin{scope}[xshift=5.8cm]
            \draw[black] (-0.65,0) -- (2.55,0);
            \node[agent,label=above:$A_0$] at (-0.2,0) {};
            \node[item,label=below:$B_0$] at (0.25,0) {};
            \node[item,label=below:$b_1$] at (0.75,0) {};
            \node[agent,label=above:$a_n$] at (1.45,0) {};
            \node[item,label=below:$b_n$] at (2.25,0) {};
            \node at (1.0,-0.8) {$d_R$};
        \end{scope}
    \end{tikzpicture}
    \caption{The $k\geq2$ agent-location construction, with $A_0=\{a_1,\ldots,a_{n-1}\}$ and $B_0=\{b_2,\ldots,b_{n-1}\}$. The perturbation makes the left clusters distinct while preserving the hidden side of $b_n$.}
    \label{fig:agent-location-kgeq2}
\end{figure}

We first compute a useful upper bound on $\cost_k(\Mopt,\cdot)$. In $d_R$,
match $b_n$ to $a_n$ and the remaining agents assortatively to the near-zero
item cluster. The edge incident to $a_n$ has cost $1$, and every other edge
has cost $n\delta$, so $\cost_k(\Mopt,d_R)\leq 1+(k-1)n\delta$. In $d_L$,
match $b_n$ to $a_1$, match $b_1$ to $a_n$, and match the remaining agents
assortatively to $B_0$. These edge costs are at most $1-n\delta$, $\delta$,
and $(n-1)\delta$, respectively, so
$\cost_k(\Mopt,d_L)\leq 1+(k-1)n\delta$.

Now let $M$ be the matching returned by $F$ on this common input. If
$M(b_n)=a_i\neq a_n$, evaluate $M$ on $d_R$. Agent $a_i$ pays
$2-\varepsilon+(n-i+1)\delta\geq 2-\varepsilon$ for $b_n$. Since $a_n$ does
not receive $b_n$, it receives an item in the near-zero cluster and pays at
least $1-\varepsilon-(n-2)\delta$. Because $k\geq 2$,
$\cost_k(M,d_R)\geq (3-2\varepsilon)-(n-2)\delta$. If $M(b_n)=a_n$, evaluate
$M$ on $d_L$. Then $a_n$ pays $2-\varepsilon$ for $b_n$, and $b_1$ is assigned
to some agent at coordinate at most $-2\delta$, who pays at least
$1-\varepsilon$. Hence $\cost_k(M,d_L)\geq 3-2\varepsilon$, which is at least
$(3-2\varepsilon)-(n-2)\delta$. Therefore, for some $d\in\{d_L,d_R\}$,
\[
    \distortion_k(M,d)
    \geq
    \frac{3-2\varepsilon-(n-2)\delta}{1+(k-1)n\delta}.
\]
Since $\delta=\varepsilon/(2n^2)$, taking $\varepsilon\to0$ makes this lower
bound tend to $3$. This gives the desired lower bound for this case.

Together, the two cases show that the consistency of $F$ is at least $3$ for
every $k\in[n]$.
\end{proof}

\paragraph{Predicted item locations.}

One might instead hope that predictions about the item side of the market are
enough. We state the lower bound for the stronger prediction that the exact
item locations are known; it immediately implies the same lower bound when
only the item order is predicted. In this setting, the prediction
$\widehat{P}$ is a vector indexed by items, where $\widehat{P}(b)$ is the
location of item $b$.

\begin{claim}\label{claim:item-locations-insufficient}
For every $k\in[n]$, every deterministic mechanism $F$ augmented with the exact
locations of the items satisfies
$\consistency^{\widehat{P}}_k(F)\geq 3$. Consequently, the item
order alone cannot yield consistency better than $3$.
\end{claim}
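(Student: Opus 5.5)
The plan is to adapt the indistinguishability template from the proof of \Cref{claim:agent-locations-insufficient}, reversing the roles of agents and items. When item locations are revealed, the side of an outlier item can no longer be hidden. What can still be hidden is \emph{which} of two agents with identical preference lists sits where. On the line, an agent's preference list only pins it down to a cell of the arrangement cut out by the midpoints of pairs of items. So I would build two metrics $d_1,d_2$ with the same item locations $\widehat{P}$ and the same profile $\sigma$. They differ only by exchanging the positions of two agents $a,a'$ that lie in a common cell. Since $F$ sees identical input, it commits to one recipient for a distinguished far item $b_n$. I would then evaluate $F$ on whichever metric places that recipient badly, while the optimum gives $b_n$ to the nearby agent.

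Following the paper, I split into $k=1$ and $k\geq2$, fix $\varepsilon\in(0,1/2)$, and use a perturbation $\delta=\varepsilon/(2n^2)$ to keep all points distinct.

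For $k\geq 2$, I would place the items of $B_0=\{b_1,\ldots,b_{n-1}\}$ in a cluster $\{(i-1)\delta\}$ and place $b_n$ at $2$. The agents $a_1,\ldots,a_{n-2}$ sit just right of the cluster. The two special agents $a,a'$ occupy the positions $n\delta$ and $1-\varepsilon$, assigned in one order in $d_1$ and the other in $d_2$. Both positions lie right of every $B_0$ midpoint and left of the $b_{n-1}$--$b_n$ midpoint, so every agent has the same list: $B_0$ from right to left, then $b_n$.

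I would then bound the optimum. Matching $b_n$ to the agent at $1-\varepsilon$ and the rest assortatively gives $\cost_k(\Mopt,d_j)\leq 1+(k-1)n\delta$. If $F$ gives $b_n$ to an agent that sits near the cluster in some $d_j$, that agent pays about $2$. The agent at $1-\varepsilon$ then receives a cluster item and pays about $1$, so $\cost_k(M,d_j)\gtrsim 3-2\varepsilon$. Choosing $d_j$ appropriately covers both possible recipients, whether $a$, $a'$, or any $a_i$ with $i\leq n-2$; for the latter, either metric works. Letting $\varepsilon\to0$ makes the ratio tend to $3$.

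The case $k=1$ is the main obstacle. The construction above only forces distortion about $2$ for the top-$1$ cost. Reaching $3$ requires a wrong recipient paying about $3$ while the optimum bottleneck is $1$. That in turn requires the far agent to sit on the opposite side of the cluster from $b_n$. Doing so naively reverses that agent's order on $B_0$, which breaks indistinguishability.

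My first attempt would exploit that only the order \emph{within} $B_0$ is at stake. I would collapse the ambiguity onto a single near item $b_1$ and push the remaining $n-2$ items together with their optimally matched agents far away, say around coordinate $-L$. Then all agents of interest lie on the same side of those items and rank them identically, while the local gadget $b_1$ at $0$, $b_n$ at $2$, with agents at $-1$ and $1-\varepsilon$, is exactly the two-agent instance. Both special agents rank $b_1$ before $b_n$, since $d(-1,b_1)=1<3$ and $d(1-\varepsilon,b_1)=1-\varepsilon<1+\varepsilon$. Swapping them gives top-$1$ cost $3$ versus an optimum of $\max\{1,1+\varepsilon\}$.

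The remaining checks are consistency and cost. I must confirm that the far gadget's own optimal edges have cost at most $1$ and that the mechanism cannot benefit by using the far agents; sending $b_n$ to one of them costs about $L$. I must also confirm that both special agents rank the far items identically, which holds since they lie on the same side of all far-item midpoints. With these checks, $\consistency^{\widehat{P}}_1(F)\geq 3-O(\varepsilon)$ follows.
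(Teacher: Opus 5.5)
Your proposal is correct and follows essentially the paper's argument: two special agents with identical preference lists whose positions the item locations cannot reveal, and an adversarial placement after $F$ commits to a recipient for $b_n$. Your $k=1$ gadget is the paper's up to translation, with the fillers pushed far to the left instead of the right. Your $k\geq 2$ variant clusters the fillers next to $b_1$ rather than far away, and works just as well. One small slip: the optimal edge for $b_n$ costs $1+\varepsilon$, so the bound should read $\cost_k(\Mopt,d_j)\leq 1+\varepsilon+(k-1)n\delta$; this does not affect the limit.
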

\begin{proof}
The proof again uses indistinguishable realizations of the same prediction,
but now the hidden information is the identity of the agent that should
receive the far item. We fix the item locations and construct two special
agents with identical preference lists. The mechanism cannot tell which
special agent is near $b_n$, so after it commits to a recipient for $b_n$, we
choose the consistent realization in which this decision is costly. The filler
items are placed far to the right so that spreading them apart does not reveal
the hidden special-agent identity.

Fix a deterministic item-location-augmented mechanism $F$.
Agents $a_1$ and $a_2$ are special; all remaining agents, when present, form
$A_F=\{a_3,\ldots,a_n\}$. Write $B_F=\{b_2,\ldots,b_{n-1}\}$ for the filler
items. In the figures, $a_0$ denotes the special agent placed at coordinate
$0$, and $a_\star$ denotes the other special agent; these are role labels, not
fixed identities.

\paragraph{Case of $k=1$:}
Choose $0<\varepsilon<1/10$ and let $L=10$. The predicted item
locations are
\[
    \widehat{P}(b_1)=1+\varepsilon,\qquad
    \widehat{P}(b_n)=3,\qquad
    \widehat{P}(b_j)=L+j\varepsilon \quad\text{for } j=2,\ldots,n-1.
\]
For each filler agent $a_j$ with $j\geq 3$, place $a_j$ at
\(\widehat{P}(b_{j-1})+\varepsilon/10\), with induced preferences. Both
special agents rank $b_1$ first, $b_n$ second, and then $B_F$; this list is
induced at either coordinate $0$ or coordinate $2$.

\begin{figure}[H]
    \centering
    \begin{tikzpicture}[
        x=1.08cm,y=0.9cm,
        agent/.style={circle,draw=cyan!60!black,fill=\agentcolor,inner sep=2pt},
        item/.style={circle,draw=black,fill=\itemcolor,inner sep=2pt},
        every node/.style={font=\scriptsize}
    ]
        \draw[black] (-0.35,0) -- (6.7,0);
        \node[agent,label=above:$a_0$] at (0,0) {};
        \node[item,label=below:$b_1$] at (1.2,0) {};
        \node[agent,label=above:$a_\star$] at (2.4,0) {};
        \node[item,label=below:$b_n$] at (3.6,0) {};
        \node at (4.6,0) {$\cdots$};
        \node[item,label=below:$B_F$] at (5.8,0) {};
        \node[agent,label=above:$A_F$] at (6.2,0) {};
    \end{tikzpicture}
    \caption{The $k=1$ item-location construction, with $A_F=\{a_3,\ldots,a_n\}$ and $B_F=\{b_2,\ldots,b_{n-1}\}$. The role labels $a_0$ and $a_\star$ are assigned after $F$ chooses the recipient of $b_n$.}
    \label{fig:item-location-k1}
\end{figure}

Run $F$ on this preference profile and on the item locations
$\widehat{P}$. If it assigns $b_n$ to $a_1$, place $a_1$ at $0$ and $a_2$ at $2$; if
it assigns $b_n$ to $a_2$, swap these two locations; and if it assigns $b_n$
to a filler agent, place $a_1$ at $2$ and $a_2$ at $0$. In all cases, the
same input remains consistent. Let $d$ be the resulting metric.

We first compute $\cost_1(\Mopt,d)$. Match $b_n$ to the special agent at
coordinate $2$, match $b_1$ to the special agent at coordinate $0$, and match
each filler item privately. The two special-agent costs are $1$ and
$1+\varepsilon$, and each filler cost is $\varepsilon/10$, so
$\cost_1(\Mopt,d)=1+\varepsilon$. Now let $M$ be the matching returned by
$F$ on this input. By construction, $M$ gives $b_n$ either to the special
agent at coordinate $0$, who pays $3$, or to a filler agent, who pays at least
$L-3>3$. Hence $\cost_1(M,d)\geq 3$, and therefore
$\distortion_1(M,d)\geq 3/(1+\varepsilon)$. Taking $\varepsilon\to0$ makes
this lower bound tend to $3$. This gives the desired lower bound for this
case.

\paragraph{Case of $k\geq 2$:}
Choose $0<\varepsilon<1/10$ and again let $L=10$. The predicted
item locations are
\[
    \widehat{P}(b_1)=\varepsilon,\qquad
    \widehat{P}(b_n)=2,\qquad
    \widehat{P}(b_j)=L+j\varepsilon \quad\text{for } j=2,\ldots,n-1.
\]
For each filler agent $a_j$ with $j\geq 3$, place $a_j$ at
\(\widehat{P}(b_{j-1})+\varepsilon/10\). The special agents both rank $b_1$
first, $b_n$ second, and then $B_F$; this list is induced at either
coordinate $0$ or coordinate $1$.

\begin{figure}[H]
    \centering
    \begin{tikzpicture}[
        x=1.08cm,y=0.9cm,
        agent/.style={circle,draw=cyan!60!black,fill=\agentcolor,inner sep=2pt},
        item/.style={circle,draw=black,fill=\itemcolor,inner sep=2pt},
        every node/.style={font=\scriptsize}
    ]
        \draw[black] (-0.35,0) -- (6.35,0);
        \node[agent,label=above:$a_0$] at (0,0) {};
        \node[item,label=below:$b_1$] at (0.65,0) {};
        \node[agent,label=above:$a_\star$] at (1.9,0) {};
        \node[item,label=below:$b_n$] at (3.0,0) {};
        \node at (4.0,0) {$\cdots$};
        \node[item,label=below:$B_F$] at (5.25,0) {};
        \node[agent,label=above:$A_F$] at (5.7,0) {};
    \end{tikzpicture}
    \caption{The $k\geq2$ item-location construction, with $A_F=\{a_3,\ldots,a_n\}$ and $B_F=\{b_2,\ldots,b_{n-1}\}$. The prediction does not reveal which special agent plays the role of $a_\star$.}
    \label{fig:item-location-kgeq2}
\end{figure}

Run $F$ on this input. If it assigns $b_n$ to
$a_1$, place $a_1$ at $0$ and $a_2$ at $1$; if it assigns $b_n$ to $a_2$,
swap these two locations; and if it assigns $b_n$ to a filler agent, place
$a_1$ at $1$ and $a_2$ at $0$. Again, the same input remains consistent. Let
$d$ be the resulting metric.

We first compute an upper bound on $\cost_k(\Mopt,d)$. Match $b_n$ to the
special agent at coordinate $1$, match $b_1$ to the special agent at
coordinate $0$, and match each filler item privately. The two special-agent
costs are $1$ and $\varepsilon$, and each filler cost is $\varepsilon/10$, so
$\cost_k(\Mopt,d)\leq 1+(k-1)\varepsilon$. Now let $M$ be the matching
returned by $F$ on this input. If $M$ assigns $b_n$ to the special agent at
coordinate $0$, that agent pays $2$, and the other special agent receives an
item at distance at least $1-\varepsilon$. Since $k\geq2$,
$\cost_k(M,d)\geq 3-\varepsilon$. If $b_n$ goes to a filler agent, that agent
alone pays at least $L-2>3-\varepsilon$, so again
$\cost_k(M,d)\geq 3-\varepsilon$. Thus
\[
    \distortion_k(M,d)
    \geq
    \frac{3-\varepsilon}{1+(k-1)\varepsilon}.
\]
Taking $\varepsilon\to0$ makes this lower bound tend to $3$. This gives the
desired lower bound for this case.

Together, the two cases show that the consistency of $F$ is at least $3$ for
every $k\in[n]$.
\end{proof}

\end{document}